\theoremstyle{thmstyleone}%
\newtheorem{theorem}{Theorem}[section]
\newtheorem{proposition}[theorem]{Proposition}%
\newtheorem{lemma}[theorem]{Lemma}
\theoremstyle{thmstyletwo}%
\theoremstyle{thmstylethree}%
\newtheorem{definition}{Definition}%
\begin{document}

\title[Article Title]{An Attack on $p$-adic Lattice Public-key Encryption Cryptosystems and Signature Schemes}


\author*[1,2]{\fnm{Chi} \sur{Zhang}}\email{zhangchi171@mails.ucas.ac.cn}

%

\affil*[1]{\orgdiv{Key Laboratory of Mathematics Mechanization, NCMIS, Academy of Mathematics and Systems Science}, \orgname{Chinese Academy of Sciences}, \orgaddress{\city{Beijing}, \postcode{100190}, \country{People's Republic of China}}}

\affil[2]{\orgdiv{School of Mathematical Sciences}, \orgname{University of Chinese Academy of Sciences}, \orgaddress{\city{Beijing}, \postcode{100049}, \country{Peoples Republic of China}}}



\abstract{Lattices have many significant applications in cryptography. In 2021, the $p$-adic signature scheme and public-key encryption cryptosystem were introduced. They are based on the Longest Vector Problem (LVP) and the Closest Vector Problem (CVP) in $p$-adic lattices. These problems are considered to be challenging and there are no known deterministic polynomial time algorithms to solve them. In this paper, we improve the LVP algorithm in local fields. The modified LVP algorithm is a deterministic polynomial time algorithm when the field is totally ramified and $p$ is a polynomial in the rank of the input lattice. We utilize this algorithm to attack the above schemes so that we are able to forge a valid signature of any message and decrypt any ciphertext. Although these schemes are broken, this work does not mean that $p$-adic lattices are not suitable in constructing cryptographic primitives. We propose some possible modifications to avoid our attack at the end of this paper.}

\keywords{$p$-adic lattice, LVP, CVP, Signature scheme, Public-key encryption cryptosystem, Attack}



\maketitle

\section{Introduction}\label{se-1}

Lattices were investigated since the late $18$th century. They are important research objects in geometric number theory and have many applications. There are two famous computational problems in Euclidean lattices: the Shortest Vector Problem (SVP) and the Closest Vector Problem (CVP). Much research has been done to study these problems. Van Emde Boas \cite{ref-6} proved that the SVP is NP-hard in the $l_{\infty}$ norm and the CVP is NP-hard in the $l_2$ norm by reducing them to the weak partition problem. Ajtai \cite{ref-7} proved that the SVP is NP-hard under randomized reductions in the $l_2$ norm. These hard problems are useful in constructing cryptographic primitives. The GGH scheme \cite{ref-8}, the NTRU scheme \cite{ref-9} and the LWE scheme \cite{ref-10} are based on these problems.\par
Since Peter Shor \cite{ref-11} proved that the classical public-key cryptosystems such as RSA and ElGamal would be broken under future quantum computer, researchers have been dedicated to finding cryptographic primitives which are quantum-resistant. In 2022, NIST \cite{ref-16} announced four algorithms which passed the third round of post-quantum cryptography standardization solicitation. They are CRYSTALS-Kyber \cite{ref-12}, CRYSTALS-Dilithium \cite{ref-13}, Falcon \cite{ref-14} and SPHINCS$^{+}$ \cite{ref-15}. Three of them are lattice-based and one of them is hash-based. The lack of diversity among post-quantum assumptions is widely recognized as a big, open issue in the field. Therefore, finding new  post-quantum assumptions is of vital significance. Just as Neal Koblitz said in the Preface of his book \cite{ref-5.5}: ``But in the rapidly growing field of cryptography it is worthwhile to continually explore new one-way constructions coming from different areas of mathematics.''\par
Lattices can also be defined in local fields such as $p$-adic fields, see \cite{ref-5}. Interestingly, $p$-adic lattices possess some properties which lattices in Euclidean spaces do not have. For example, $\mathbb{Q}_p^n$ can be viewed as a field for every integer $n\ge1$, while the famous Frobenius Theorem asserts that $\mathbb{R}^n$ can be viewed as a field only when $n=1,2,4$. The case $n=2$ is the field of complex numbers and when $n=4$, the field is non-commutative (i.e., Hamilton quaternions). Besides, every $p$-adic lattice has an orthogonal basis, which is proved by Zhang et al. \cite{ref-3.5}, whereas lattices in Euclidean spaces lack such bases in general.\par
However, $p$-adic lattices do not gain as much attention as lattices in Euclidean spaces. It is not until recently that the applications of $p$-adic lattices in cryptography get developed. In 2018, motivated by the Shortest Vector Problem and the Closest Vector Problem of lattices in Euclidean spaces, Deng et al. \cite{ref-1',ref-1} introduced two new computational problems of $p$-adic lattices in local fields, the Longest Vector Problem (LVP) and the Closest Vector Problem (CVP). They proposed deterministic exponential time algorithms to solve them. The LVP algorithm takes $O(p^m)$ many $p$-adic absolute value computations, where $m$ is the rank of the input lattice. These new problems are considered to be challenging and potentially applicable for constructing public-key encryption cryptosystems and signature schemes. Moreover, as the $p$-adic analogues of the lattices in Euclidean spaces, it is reasonable to expect these problems to be quantum-resistant, which  might provide new alternative candidates to construct post-quantum cryptographic primitives.\par
Although there are no known deterministic polynomial time algorithms to solve the LVP or the CVP, Deng et al. \cite{ref-2} proved that the LVP and the CVP can be solved efficiently with  the help of an orthogonal basis. Hence orthogonal bases can be used to construct trapdoor information for cryptographic schemes. In 2021, by introducing a trapdoor function with an orthogonal basis of a $p$-adic lattice, Deng et al. \cite{ref-2} constructed the first signature scheme and public-key encryption cryptosystem based on $p$-adic lattices.\par
The experimental results demonstrated that the new schemes achieve good efficiency. As for security, there is no formal proof for these schemes since the problems in local fields are relatively very new. Instead, Deng et al. \cite{ref-2} discussed some possible attacks on their schemes, including recovering a uniformizer, finding an orthogonal basis, solving CVP-instances and modulo $p$ attack. Therefore, the security of their schemes remains an open problem.\par
This paper is organized as follows. In Section \ref{se-2}, we recall some basic definitions and briefly review the $p$-adic signature scheme and public-key encryption cryptosystem. In section \ref{se-3}, we present an orthogonalization algorithm in totally ramified fields. In section \ref{se-4.1} and \ref{se-4.2}, we use this algorithm to attack the $p$-adic signature scheme and public-key encryption cryptosystem. In section \ref{se-4.3}, we demonstrate a toy example of this attack, and list the time consumption of finding a uniformizer for different public polynomials.

\section{Preliminaries}\label{se-2}

\subsection{Basic Facts about Local Fields}
In this subsection, we recall some basic facts about local fields, see \cite{ref-1',ref-1}. More details about local fields can be found in \cite{ref-4,ref-4.5}.\par
Let $p$ be a prime number. For $x\in\mathbb{Q}$ with $x\ne0$, write $x=p^{t}\cdot\frac{a}{b}$ with $t,a,b\in\mathbb{Z}$ and $p\nmid ab$. Define $\left|x\right|_p=p^{-t}$ and $\left|0\right|_p=0$. Then $|\cdot|_p$ is a non-Archimedean absolute value on $\mathbb{Q}$. Namely, we have:

\begin{enumerate}
\item $\left|x\right|_p\ge0$ and $\left|x\right|_p=0$ if and only if $x=0$;
\item $\left|xy\right|_p=\left|x\right|_p\left|y\right|_p$;
\item $\left|x+y\right|_p\le\max{\left\{\left|x\right|_p,\left|y\right|_p\right\}}$. If $\left|x\right|_p\ne\left|y\right|_p$, then $\left|x+y\right|_p=\max{\left\{\left|x\right|_p,\left|y\right|_p\right\}}$.
\end{enumerate}

Let $\mathbb{Q}_p$ be the completion of $\mathbb{Q}$ with respect to $|\cdot|_p$. Denote
$$\mathbb{Z}_p=\left\{x\in\mathbb{Q}_p:\left|x\right|_p\le1\right\}.$$
We have
$$\mathbb{Q}_p=\left\{\sum_{i=j}^{\infty}a_ip^i:a_i\in\left\{0,1,2,\dots,p-1\right\},i\ge j,j\in\mathbb{Z}\right\},$$
and
$$\mathbb{Z}_p=\left\{\sum_{i=0}^{\infty}a_ip^i:a_i\in\left\{0,1,2,\dots,p-1\right\},i\ge 0\right\}.$$
$\mathbb{Z}_p$ is compact and $\mathbb{Q}_p$ is locally compact. $\mathbb{Z}_p$ is a discrete valuation ring, it has a unique nonzero principal maximal ideal $p\mathbb{Z}_p$ and $p$ is called a uniformizer of $\mathbb{Q}_p$. The unit group of $\mathbb{Z}_p$ is
$$\mathbb{Z}_p^{\times}= \left\{x\in\mathbb{Q}_p:\left|x\right|_p=1\right\}.$$
The residue class field $\mathbb{Z}_p/p\mathbb{Z}_p$ is a finite field with $p$ elements.\par
Let $n$ be a positive integer, and let $K$ be an extension field of $\mathbb{Q}_p$ of degree $n$. We fix some algebraic closure $\overline{\mathbb{Q}}_p$ of $\mathbb{Q}_p$ and view $K$ as a subfield of $\overline{\mathbb{Q}}_p$. Such $K$ exists. For example, let $K=\mathbb{Q}_p(\alpha)$ with $\alpha^n=p$. Because $X^n-p$ is an Eisenstein polynomial over $\mathbb{Q}_p$, it is irreducible over $\mathbb{Q}_p$, then $K$ has degree $n$ over $\mathbb{Q}_p$. The $p$-adic absolute value (or norm) $|\cdot|_p$ on $\mathbb{Q}_p$ can be extended uniquely to $K$, i.e., for $x\in K$, we have
$$\left|x\right|_p=\left| N_{K/\mathbb{Q}_p}(x)\right|_p^{\frac{1}{n}},$$
where $N_{K/\mathbb{Q}_p}$ is the norm map from $K$ to $\mathbb{Q}_p$. $K$ is complete with respect to $|\cdot|_p$.\par
Denote
$$\mathcal{O}_K=\left\{x\in K:\left|x\right|_p\le1\right\}.$$
$\mathcal{O}_K$ is also a discrete valuation ring, it has a unique nonzero principal maximal ideal $\pi\mathcal{O}_K$ and $\pi$ is called a uniformizer of $K$. $\mathcal{O}_K$ is a free $\mathbb{Z}_p$-module of rank $n$. $\mathcal{O}_K$ is compact and $K$ is locally compact. The unit group of $\mathcal{O}_K$ is
$$\mathcal{O}_K^{\times}= \left\{x\in K:\left|x\right|_p=1\right\}.$$
The residue class field $\mathcal{O}_K/\pi\mathcal{O}_K$ is a finite extension of $\mathbb{Z}_p/p\mathbb{Z}_p$. Call the positive integer
$$f=\left[\mathcal{O}_K/\pi\mathcal{O}_K:\mathbb{Z}_p/p\mathbb{Z}_p\right]$$
the residue field degree of $K/\mathbb{Q}_p$. As ideals in $\mathcal{O}_K$, we have $p\mathcal{O}_K=\pi ^e\mathcal{O}_K$. Call the positive integer $e$ the ramification index of $K/\mathbb{Q}_p$. We have $n=\left[K:\mathbb{Q}_p\right]=ef$. When $e=1$, the extension $K/\mathbb{Q}_p$ is unramified, and when $e=n$, $K/\mathbb{Q}_p$ is totally ramified. Each element $x$ of the multiplicative group $K^{\times}$ of nonzero elements of $K$ can be written uniquely as $x=u\pi^t$ with $u\in\mathcal{O}_K^{\times}$ and $t\in\mathbb{Z}$. We have $p=u\pi^e$ with $u\in\mathcal{O}_K^{\times}$, so $\left|\pi\right|_p=p^{-\frac{1}{e}}$.

\subsection{Norm and Orthogonal Basis}

Let $p$ be a prime. Let $V$ be a vector space over $\mathbb{Q}_p$. A norm on $V$ is a function
$$|\cdot|:V\rightarrow\mathbb{R}$$
such that

\begin{enumerate}
\item $\left|v\right|\ge0$ for any $v\in V$, and $\left|v\right|=0$ if and only if $v=0$;
\item $\left|xv\right|=\left|x\right|_{p}\cdot \left|v\right|$ for any $x\in\mathbb{Q}_p$ and $v\in V$;
\item $\left|v+w\right|\le\max{\left\{\left|v\right|,\left|w\right|\right\}}$ for any $v,w\in V$.
\end{enumerate}

If $|\cdot|$ is a norm on $V$, and if $\left|v\right|\ne \left|w\right|$ for $v,w\in V$, then we must have $\left|v+w\right|=\max{\{\left|v\right|,\left|w\right|\}}$. Weil (\cite{ref-5} page 26) proved the following proposition.

\begin{proposition}[\cite{ref-5}]
Let $V$ be a vector space over $\mathbb{Q}_p$ of finite dimension $n>0$, and let $\left|\cdot\right|$ be a norm on $V$. Then there is a decomposition $V=V_1+V_2+\cdots+V_n$ of $V$ into a direct sum of subspaces $V_i$ of dimension $1$, such that
$$\left|\sum_{i=1}^{n}{v_i}\right|=\max_{1\le i\le n}{\left|v_i\right|}$$
for any $v_i\in V_i$, $i=1,2,\dots,n$.
\end{proposition}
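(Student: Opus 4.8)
The plan is to argue by induction on $n = \dim V$, reducing the general statement to the construction of a single one-dimensional summand that is orthogonal, in the ultrametric sense, to a complementary hyperplane. The base case $n=1$ is immediate: we take $V_1 = V$ and the claimed identity is vacuous. For the inductive step I would fix any basis $e_1,\dots,e_n$ of $V$, set $H = \mathbb{Q}_p e_1 + \cdots + \mathbb{Q}_p e_{n-1}$, and apply the induction hypothesis to the restriction of $|\cdot|$ to $H$. This yields a decomposition $H = V_1 + \cdots + V_{n-1}$ into lines, with basis vectors $u_1,\dots,u_{n-1}$, satisfying $\left|\sum_{i<n} b_i u_i\right| = \max_{i<n}\left|b_i u_i\right|$ for all $b_i \in \mathbb{Q}_p$. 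It then remains to produce a line $V_n$ with $V = H \oplus V_n$ and to verify compatibility with the decomposition of $H$.

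The key technical lemma I would isolate is this: if $v \notin H$ realizes the minimal norm in its coset, i.e. $|v| \le |v - w|$ for every $w \in H$, then $|v + w| = \max\{|v|, |w|\}$ for all $w \in H$. I would prove it by splitting into cases comparing $|w|$ with $|v|$. When $|w| \ne |v|$ the strengthened ultrametric inequality gives the identity at once; when $|w| = |v|$ the bound $|v+w| \le |v|$ is automatic, while $|v+w| \ge |v|$ holds because $v+w$ lies in the coset $v + H$ on which $|v|$ is minimal by hypothesis. A short homogeneity computation, replacing $w$ by $w/c$, then upgrades this from the single vector $v$ to the whole line $\mathbb{Q}_p v$, giving $|cv + w| = \max\{|cv|, |w|\}$ for all $c \in \mathbb{Q}_p$ and $w \in H$.

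The crux is therefore to show that the coset $e_n + H$ actually contains a vector of minimal norm, that is, that the infimum $\inf_{w \in H}|e_n - w|$ is attained. Here I would exploit the local compactness of $\mathbb{Q}_p$ together with the orthogonal basis of $H$ just produced. Since $w = 0$ gives $|e_n|$, the infimum is unchanged if we restrict to those $w = \sum_{i<n} b_i u_i$ with $|e_n - w| \le |e_n|$; any such $w$ satisfies $|w| \le |e_n|$, hence $\max_i |b_i u_i| \le |e_n|$ by the inductive identity, so each $b_i$ is confined to a closed, hence compact, ball in $\mathbb{Q}_p$. The map $(b_i) \mapsto |e_n - \sum b_i u_i|$ is continuous, the coordinate-to-vector map being Lipschitz by the ultrametric inequality and $|\cdot|$ itself being continuous, so the infimum over this compact parameter set is attained at some $w_0$. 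Setting $v_n = e_n - w_0$ and $V_n = \mathbb{Q}_p v_n$, the lemma applies to $v_n$, and $v_n \notin H$ guarantees $V = H \oplus V_n$.

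Finally I would assemble the decomposition: given $v_i \in V_i$ for $i = 1,\dots,n$, write $w = v_1 + \cdots + v_{n-1} \in H$, so that $|w| = \max_{i<n}|v_i|$ by induction, and invoke the orthogonality of $V_n$ to $H$ to conclude $|w + v_n| = \max\{|w|, |v_n|\} = \max_{1 \le i \le n}|v_i|$, as required. I expect the attainment-of-the-minimum step to be the main obstacle: the argument must avoid circularity, since the boundedness of the coefficients $b_i$ must come from the already-established orthogonal structure on $H$ rather than from the very identity we are still trying to prove on $V$, and this is precisely where the local compactness of the ground field is genuinely needed.
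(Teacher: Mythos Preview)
The paper does not include its own proof of this proposition: it is stated with attribution to Weil (\cite{ref-5}, page 26) and followed only by the remark that Weil proved it over an arbitrary $p$-field. There is therefore nothing in the paper to compare your argument against directly.

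That said, your proof is correct and is essentially the classical argument (and, in fact, the one Weil gives): induction on the dimension, splitting off a hyperplane $H$ already handled by the inductive hypothesis, and then using local compactness of $\mathbb{Q}_p$ to find a representative of minimal norm in a nontrivial coset of $H$. Your key lemma---that a minimal-norm coset representative is ultrametrically orthogonal to $H$---is exactly the mechanism that makes the induction close, and your verification of it via the two cases $|w|\ne|v|$ and $|w|=|v|$ is clean. The compactness step is also handled carefully: you correctly use the orthogonal structure on $H$ (already available by induction) to bound the coefficients $b_i$, avoiding the circularity you flag at the end. No gaps.
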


Weil proved the above proposition for finite-dimensional vector spaces over a $p$-field (commutative or not). For simplicity, we only consider the case $\mathbb{Q}_p$. Thus, we can define the orthogonal basis.

\begin{definition}[orthogonal basis \cite{ref-5}]
Let $V$ be a vector space over $\mathbb{Q}_p$ of finite dimension $n>0$, and let $|\cdot|$ be a norm on $V$. We call $\alpha_1,\alpha_2,\dots,\alpha_n$ an orthogonal basis of $V$ over $\mathbb{Q}_p$ if $V$ can be decomposed into the direct sum of $n$ $1$-dimensional subspaces $V_i={\rm Span}_{\mathbb{Q}_p}(\alpha_i)$ $(1\le i\le n)$, such that
$$\left|\sum_{i=1}^{n}{v_i}\right|=\max_{1\le i\le n}{\left|v_i\right|}$$
for any $v_i\in V_i$, $i=1,2,\dots,n$.
Two subspaces $U$, $W$ of $V$ is said to be orthogonal if the sum $U+W$ is a direct sum and it holds that $\left|u+w\right|=\max\left\{\left|u\right|, \left|w\right|\right\}$ for all $u\in U$, $w\in W$.
\end{definition}

We have the following immediate proposition and lemma.

\begin{proposition}[\cite {ref-2}]\label{pr-2.3}
Let $V$ be a vector space over $\mathbb{Q}_p$ of finite dimension $n>0$,  and let $|\cdot|$ be a norm on $V$. Let $\alpha_1,\alpha_2,\dots,\alpha_n$ be a basis of $V$ over $\mathbb{Q}_p$. If
$$\left\{\left|v_i\right|:v_i\in\mathbb{Q}_p\cdot\alpha_i\right\}\bigcap\left\{\left|v_j\right|:v_j\in\mathbb{Q}_p\cdot\alpha_j\right\}=\left\{0\right\}$$
for any $i,j=1,2,\dots,n$, $i\ne j$, then $\alpha_1,\alpha_2,\dots,\alpha_n$ is an orthogonal basis of $V$ over $\mathbb{Q}_p$.
\end{proposition}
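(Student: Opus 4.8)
The plan is to exploit the non-Archimedean (ultrametric) property of the norm recorded just below its definition: whenever $|v|\ne|w|$ one has $|v+w|=\max\{|v|,|w|\}$. Since $\alpha_1,\dots,\alpha_n$ is a basis of $V$, the decomposition $V=V_1+\cdots+V_n$ with $V_i=\mathrm{Span}_{\mathbb{Q}_p}(\alpha_i)$ is automatically a direct sum, so by the definition of orthogonal basis the only thing left to verify is the norm identity $\left|\sum_{i=1}^n v_i\right|=\max_{1\le i\le n}|v_i|$ for arbitrary $v_i\in V_i$.

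First I would dispose of the trivial case in which all $v_i=0$, where both sides vanish. Otherwise, the crucial observation is that the hypothesis forces the \emph{nonzero} components to have pairwise distinct absolute values. Indeed, if $v_i,v_j$ are both nonzero with $i\ne j$ and $|v_i|=|v_j|$, then this common value is a nonzero number lying in both $\{|v|:v\in\mathbb{Q}_p\cdot\alpha_i\}$ and $\{|v|:v\in\mathbb{Q}_p\cdot\alpha_j\}$, contradicting that their intersection is $\{0\}$.

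Next I would reduce to a statement about finitely many elements with distinct norms: if $w_1,\dots,w_k$ are nonzero with $|w_1|>|w_2|>\cdots>|w_k|$, then $\left|\sum_{t=1}^k w_t\right|=|w_1|$. This follows by induction on $k$; the base case $k=1$ is trivial, and for the inductive step one sets $S=w_2+\cdots+w_k$, uses the inductive hypothesis to get $|S|=|w_2|<|w_1|$, and then applies $|w_1+S|=\max\{|w_1|,|S|\}=|w_1|$ because $|w_1|\ne|S|$. Applying this to the nonzero components, reindexed in strictly decreasing order of absolute value, yields $\left|\sum_i v_i\right|=\max_i|v_i|$, which is exactly the orthogonal basis condition.

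The heart of the argument is the single observation in the second paragraph: disjointness of the value sets upgrades to pairwise distinctness of the norms of the nonzero components, after which the non-Archimedean triangle inequality does all the work. Accordingly I do not expect a serious obstacle; the only point requiring care is the bookkeeping in the inductive step, namely ensuring at each stage that the running partial sum is strictly dominated by the next term so that the \emph{equality} case of the ultrametric inequality is available.
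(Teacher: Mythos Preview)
Your argument is correct and is precisely the ``immediate'' proof the paper has in mind: the paper does not actually supply a proof of this proposition, instead citing it from \cite{ref-2} and labeling it immediate. The route you take---observing that the hypothesis forces the nonzero components to have pairwise distinct norms, then invoking the equality case of the ultrametric inequality inductively---is the standard and expected one, so there is nothing further to compare.
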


\begin{lemma}[\cite{ref-3}]\label{le-2.4}
Let $V$ be a vector space over $\mathbb{Q}_p$ of finite dimension $n>0$, and let $|\cdot|$ be a norm on $V$. Let $\alpha_{1},\alpha_{2},\dots,\alpha_{n}$ be $\mathbb{Q}_p$-linearly independent vectors of $V$. Then $\alpha_{1},\alpha_{2},\dots,\alpha_{n}$ is an orthogonal basis of $V$ if and only if it holds that
$$\left|\sum^{n}_{i=1}a_{i}\alpha_{i}\right|=\max_{1\le i\le n}\left|a_{i}\alpha_{i}\right|,$$
where one of the $a_{1},a_{2},\dots,a_{n}$ is $1$ and the others are in $\mathbb{Z}_p$.
\end{lemma}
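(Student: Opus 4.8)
The plan is to first reformulate what it means for $\alpha_1,\dots,\alpha_n$ to be an orthogonal basis. Since these $n$ vectors are $\mathbb{Q}_p$-linearly independent and $n=\dim_{\mathbb{Q}_p}V$, they automatically span $V$, so $V=\bigoplus_{i=1}^{n}\mathbb{Q}_p\alpha_i$ and each $V_i:={\rm Span}_{\mathbb{Q}_p}(\alpha_i)$ is one-dimensional. Writing an arbitrary element of $V_i$ as $a_i\alpha_i$ with $a_i\in\mathbb{Q}_p$, the defining condition of an orthogonal basis becomes exactly the assertion that
$$\left|\sum_{i=1}^{n}a_i\alpha_i\right|=\max_{1\le i\le n}\left|a_i\alpha_i\right|$$
holds for all $a_1,\dots,a_n\in\mathbb{Q}_p$. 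Thus the lemma is really claiming that it suffices to verify this identity on the much smaller family of coefficient tuples in which one entry equals $1$ and the rest lie in $\mathbb{Z}_p$.

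The ``only if'' direction is immediate: the full orthogonality identity includes every such normalized tuple as a special case, so the stated condition follows at once.

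For the ``if'' direction, I would argue by scaling an arbitrary tuple down to a normalized one. Given $(a_1,\dots,a_n)\in\mathbb{Q}_p^{n}$, the case in which all $a_i$ vanish is trivial, so assume some $a_i\ne0$. The crucial choice is the index to normalize by: I would pick $j$ with $\left|a_j\right|_p=\max_{1\le i\le n}\left|a_i\right|_p$ (the $p$-adic absolute value of the coefficient, not the norm $\left|a_i\alpha_i\right|$) and set $b_i=a_i/a_j$. Then $b_j=1$ and $\left|b_i\right|_p=\left|a_i\right|_p/\left|a_j\right|_p\le1$, so $b_i\in\mathbb{Z}_p$ for every $i\ne j$; hence $(b_1,\dots,b_n)$ is precisely a tuple to which the hypothesis applies. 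Using the homogeneity property $\left|xv\right|=\left|x\right|_p\left|v\right|$ of the norm, I would factor $\sum_i a_i\alpha_i=a_j\sum_i b_i\alpha_i$, apply the hypothesis to the $b_i$, and then multiply back through by $\left|a_j\right|_p$, which distributes over the maximum, to recover $\left|\sum_i a_i\alpha_i\right|=\max_i\left|a_i\alpha_i\right|$ for the original tuple.

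The step I expect to be the genuine subtlety is exactly this choice of normalizing index. A naive attempt to divide by the coefficient maximizing $\left|a_i\alpha_i\right|$ does not in general force the remaining quotients into $\mathbb{Z}_p$, so the hypothesis would not apply. Normalizing instead by the coefficient of largest $p$-adic absolute value is what simultaneously guarantees that the distinguished entry becomes $1$ and that all the others land in $\mathbb{Z}_p$; once this is arranged, the remainder is a routine manipulation using multiplicativity of $\left|\cdot\right|_p$ together with the fact that scaling by the fixed quantity $\left|a_j\right|_p$ commutes with taking the maximum.
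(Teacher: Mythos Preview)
Your argument is correct. The normalization step---dividing through by a coefficient $a_j$ of maximal $p$-adic absolute value so that the resulting tuple has one entry equal to $1$ and the rest in $\mathbb{Z}_p$---is exactly the right move, and your remark that normalizing instead by the index maximizing $\left|a_i\alpha_i\right|$ would not work is a genuine point.

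However, there is nothing in this paper to compare your proof against: Lemma~\ref{le-2.4} is simply quoted from \cite{ref-3} and stated here without proof. So while your write-up stands on its own as a valid proof, no comparison with the paper's own argument is possible.
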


\subsection{$p$-adic Lattice}

We recall the definition of a $p$-adic lattice.

\begin{definition}[$p$-adic lattice \cite{ref-3}]
Let $V$ be a vector space over $\mathbb{Q}_p$ of finite dimension $n>0$. Let $m$ be a positive integer with $1\le m\le n$. Let $\alpha_1,\dots,\alpha_m\in V$ be $\mathbb{Q}_p$-linearly independent vectors. A $p$-adic lattice in $V$ is the set
$$\mathcal{L}(\alpha_1,\dots,\alpha_m):=\left\{\sum^{m}_{i=1}{a_i\alpha_i}:a_i\in\mathbb{Z}_p,1\le i\le m\right\}$$
of all $\mathbb{Z}_p$-linear combinations of $\alpha_1,\dots,\alpha_m$. The sequence of vectors $\alpha_1,\dots,\alpha_m$ is called a basis of the lattice $\mathcal{L}(\alpha_1,\dots,\alpha_m)$. The integers $m$ and $n$ are called the rank and dimension of the lattice, respectively. When $n=m$, we say that the lattice is of full rank.
\end{definition}

We can also define the orthogonal basis of a $p$-adic lattice.

\begin{definition}[orthogonal basis of a $p$-adic lattice \cite{ref-3}]
Let $V$ be nontrivial finite-dimensional vector space over $\mathbb{Q}_p$, and let $\alpha_1,\dots,\alpha_n\in V$. If $\alpha_1,\dots,\alpha_n\in V$ form an orthogonal basis of the vector space ${\rm Span_{\mathbb{Q}_p}}(\alpha_1,\dots,\alpha_n)$, then we say $\alpha_1,\dots,\alpha_n$ is an orthogonal basis of the lattice $\mathcal{L}(\alpha_1,\dots,\alpha_n)$.
\end{definition}

\subsection{LVP and CVP}

Deng et al. \cite{ref-1',ref-1} introduced two new computational problems in $p$-adic lattices. They are the Longest Vector Problem (LVP) and the Closest Vector Problem (CVP). We review them briefly.

\begin{definition}[\cite{ref-1',ref-1}]
Let $\mathcal{L}=\mathcal{L}(\alpha_1,\alpha_2,\dots,\alpha_m)$ be a lattice in $V$. We define recursively a sequence of positive real numbers $\lambda_1,\lambda_2,\lambda_3,\dots$ as follows.
$$\lambda_1=\max_{1\le i\le m}{\left|\alpha_i\right|_p},$$
$$\lambda_{j+1}=\max{\left\{\left|v\right|_p:v\in\mathcal{L},\left|v\right|_p<\lambda_j\right\}} \mbox{ for } j\ge1.$$
\end{definition}

We have $\lambda_1>\lambda_2>\lambda_3>\dots$ and $\lim_{j\rightarrow\infty}\lambda_j=0$. The Longest Vector Problem is defined as follows.

\begin{definition}[\cite{ref-1',ref-1}]
Given a lattice $\mathcal{L}=\mathcal{L}(\alpha_1,\alpha_2,\dots,\alpha_m)$ in $V$, the Longest Vector Problem is to find a lattice vector $v\in\mathcal{L}$ such that $\left|v\right|_p=\lambda_2$.
\end{definition}

The Closest Vector Problem is defined as follows.

\begin{definition}[\cite{ref-1',ref-1}]
Let $\mathcal{L}=\mathcal{L}(\alpha_1,\alpha_2,\dots,\alpha_m)$ be a lattice in $V$ and let $t\in V$ be a target vector. The Closest Vector Problem is to find a lattice vector $v\in\mathcal{L}$ such that $\left|t-v\right|_p=\min_{w\in\mathcal{L}}{\left|t-w\right|_p}$.
\end{definition}

Deng et al. \cite{ref-1} provided deterministic exponential time algorithms to solve the LVP and the CVP. Additionally, Deng et al. \cite{ref-2} presented deterministic polynomial time algorithms for solving the LVP and the CVP specifically with the help of orthogonal bases.

\subsection{Successive Maxima}

The successive maxima in $p$-adic lattices is the analogue of the successive minima in Euclidean lattices. It refers to the lengths of the shortest yet orthogonal vectors in a given $p$-adic lattice.

\begin{definition}[successive maxima \cite{ref-3.5}]
Let $V$ be a vector space over $\mathbb{Q}_p$, and let $|\cdot|$ be a norm on $V$. Let $\mathcal{L}$ be a lattice of rank $n$ in $V$. Let $\alpha_1,\alpha_2,\dots,\alpha_n$ be an orthogonal basis of $\mathcal{L}$ such that $\left|\alpha_{1}\right|\ge\left|\alpha_{2}\right|\ge\cdots\ge\left|\alpha_{n}\right|$. The $i$th successive maxima of $\mathcal{L}$ respect to norm $|\cdot|$ is
$$\tilde{\lambda}_i(\mathcal{L}):=\left|\alpha_i\right|.$$
\end{definition}

To ensure that this is well defined, Zhang et al. \cite{ref-3.5} proved that the sorted norm sequence of orthogonal bases of a lattice is unique.

\begin{theorem}[\cite{ref-3.5}]\label{th-2.10}
Let $V$ be a vector space over $\mathbb{Q}_p$, and let $|\cdot|$ be a norm on $V$. Let $\mathcal{L}$ be a lattice of rank $n$ in $V$. Suppose that $\alpha_{1},\alpha_2,\dots,\alpha_{n}$ and $\beta_{1},\beta_2,\dots,\beta_{n}$ are two orthogonal bases of $\mathcal{L}$ such that $\left|\alpha_{1}\right|\ge\left|\alpha_{2}\right|\ge\cdots\ge\left|\alpha_{n}\right|$ and $\left|\beta_{1}\right|\ge\left|\beta_{2}\right|\ge\cdots\ge\left|\beta_{n}\right|$. Then we have $\left|\alpha_i\right|=\left|\beta_i\right|$ for $1\le i\le n$.
\end{theorem}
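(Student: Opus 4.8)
The plan is to show that the \emph{multiset} of norms $\{|\alpha_1|,\dots,|\alpha_n|\}$ is an invariant of $\mathcal{L}$ together with $|\cdot|$, independent of the chosen orthogonal basis; once this is established, sorting both multisets in decreasing order immediately forces $|\alpha_i|=|\beta_i|$ for every $i$. Since an orthogonal basis of $\mathcal{L}$ is by definition an orthogonal basis of $V':={\rm Span}_{\mathbb{Q}_p}(\alpha_1,\dots,\alpha_n)=\mathbb{Q}_p\mathcal{L}$, I may assume $V=V'$ is $n$-dimensional and that $\alpha_1,\dots,\alpha_n$ and $\beta_1,\dots,\beta_n$ are two $\mathbb{Z}_p$-bases of the same full-rank lattice $\mathcal{L}$.

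For each real $\lambda>0$ I would introduce $\mathcal{L}_{\le\lambda}=\{v\in\mathcal{L}:|v|\le\lambda\}$ and $\mathcal{L}_{<\lambda}=\{v\in\mathcal{L}:|v|<\lambda\}$. The ultrametric inequality makes both of these $\mathbb{Z}_p$-submodules of $\mathcal{L}$, and since $|pv|=p^{-1}|v|$ we have $p\,\mathcal{L}_{\le\lambda}\subseteq\mathcal{L}_{<\lambda}$; hence the quotient $Q(\lambda):=\mathcal{L}_{\le\lambda}/\mathcal{L}_{<\lambda}$ is annihilated by $p$ and is a finite-dimensional vector space over $\mathbb{F}_p=\mathbb{Z}_p/p\mathbb{Z}_p$. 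The key point is that $Q(\lambda)$ is defined purely in terms of $\mathcal{L}$, the norm, and $\lambda$, so $g(\lambda):=\dim_{\mathbb{F}_p}Q(\lambda)$ does not depend on the choice of orthogonal basis.

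Next I would compute $g(\lambda)$ from the basis $\alpha_1,\dots,\alpha_n$. Writing $v=\sum_k a_k\alpha_k$ with $a_k\in\mathbb{Z}_p$, orthogonality gives $|v|=\max_k|a_k|_p|\alpha_k|$, so the conditions $|v|\le\lambda$ and $|v|<\lambda$ decouple coordinate by coordinate. Consequently $\mathcal{L}_{\le\lambda}=\bigoplus_k I_k\alpha_k$ and $\mathcal{L}_{<\lambda}=\bigoplus_k J_k\alpha_k$, where $I_k=\{a\in\mathbb{Z}_p:|a|_p|\alpha_k|\le\lambda\}$ and $J_k=\{a\in\mathbb{Z}_p:|a|_p|\alpha_k|<\lambda\}$, whence $g(\lambda)=\sum_k\dim_{\mathbb{F}_p}I_k/J_k$. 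A short case analysis on the value $\lambda/|\alpha_k|$ shows $I_k/J_k\cong\mathbb{F}_p$ exactly when $|\alpha_k|\in\{\lambda,p\lambda,p^2\lambda,\dots\}$ (then $I_k=p^{c}\mathbb{Z}_p$, $J_k=p^{c+1}\mathbb{Z}_p$ for $|\alpha_k|=p^{c}\lambda$) and $I_k=J_k$ otherwise, so $g(\lambda)=\#\{k:|\alpha_k|\in\lambda\cdot p^{\mathbb{Z}_{\ge0}}\}$.

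Finally, setting $n(\lambda)=\#\{k:|\alpha_k|=\lambda\}$, the formula for $g$ telescopes: $g(\lambda)-g(p\lambda)=\sum_{c\ge0}n(p^c\lambda)-\sum_{c\ge1}n(p^c\lambda)=n(\lambda)$. As the left-hand side is intrinsic, $n(\lambda)$ is independent of the orthogonal basis; running the same computation with $\beta_1,\dots,\beta_n$ yields the same function $n(\cdot)$, so the two norm multisets coincide and their decreasing sortings agree entrywise. I expect the main subtlety — and the reason the argument must pass through $g$ and the inversion $g(\lambda)-g(p\lambda)$ rather than a direct count — to be the value-group interaction: because the norms $|\alpha_k|$ need not be integer powers of $p$, naive invariants such as ``the maximal number of $\mathbb{Q}_p$-linearly independent lattice vectors of norm $\ge\lambda$'' fail to recover the multiplicities (a short vector added to a long one stays long), and $Q(\lambda)$ over-counts precisely by the higher terms $n(p^c\lambda)$, which the telescoping cancels.
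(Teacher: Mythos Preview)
Your argument is correct: the quotient $Q(\lambda)=\mathcal{L}_{\le\lambda}/\mathcal{L}_{<\lambda}$ is a well-defined $\mathbb{F}_p$-vector space depending only on $(\mathcal{L},|\cdot|,\lambda)$, the orthogonal decomposition $\mathcal{L}_{\le\lambda}=\bigoplus_k I_k\alpha_k$ and $\mathcal{L}_{<\lambda}=\bigoplus_k J_k\alpha_k$ is valid because $|v|=\max_k|a_k|_p|\alpha_k|$ decouples the constraints, the case analysis giving $I_k/J_k\cong\mathbb{F}_p$ precisely when $|\alpha_k|=p^c\lambda$ for some $c\ge0$ is accurate (the restriction $a\in\mathbb{Z}_p$ forces $c\ge0$), and the telescoping $n(\lambda)=g(\lambda)-g(p\lambda)$ recovers the multiset of norms intrinsically.

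However, there is nothing to compare against in this paper: Theorem~\ref{th-2.10} is quoted from \cite{ref-3.5} as a preliminary result and is not proved here. The paper only \emph{uses} it (in Theorem~\ref{th-4.5}, to bound the number of mutually orthogonal vectors of maximal norm by the residue degree~$f$). So while your proof stands on its own, any comparison of methods would have to be made with \cite{ref-3.5} rather than with the present paper. Your approach via the filtration $\mathcal{L}_{<\lambda}\subseteq\mathcal{L}_{\le\lambda}$ and its $\mathbb{F}_p$-graded pieces is a standard and clean way to extract numerical invariants from an ultrametric lattice; it has the pleasant feature of avoiding any induction on $n$ or any change-of-basis matrix manipulation.
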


\subsection{Signature Scheme and Public-key Encryption Cryptosystem}

In this subsection, we briefly review the $p$-adic signature scheme and public-key encryption cryptosystem presented in \cite{ref-2}.

\subsubsection{The $p$-adic Signature Scheme}

{\bf Key Generation:} We first choose a totally ramified $K$ of degree $n$ over $\mathbb{Q}_p$, i.e., choose an Eisenstein polynomial
$$f(x)=x^n+f_1x^{n-1}+\cdots+f_{n-1}x+f_n\in\mathbb{Z}_p[x]$$
satisfying $|f_n|_p=p^{-1}$ and $|f_i|_p<1$ for $1\le i\le n-1$. Let $\theta$ be a root of $f(x)= 0$. Choose another $\zeta\in\mathcal{O}_K=\mathbb{Z}_p[\theta]$ such that $\mathbb{Z}_p[\theta]=\mathbb{Z}_p[\zeta]$. Then $K=\mathbb{Q}_p(\zeta)$. Let $F(x)\in\mathbb{Z}_p[x]$ be the minimum polynomial of $\zeta$ over $\mathbb{Q}_p$, which is also monic and of degree $n$. Choose $n$ non-negative integers $j_i\in\mathbb{Z}$ such that $j_i\pmod n$ $(1\le i\le n)$ are distinct and $j_1=0$. Set $\alpha_i=\theta^{j_i}$ $(1\le i\le n)$. Then $\alpha_1=1,\alpha_2,\dots,\alpha_n$ is an orthogonal basis. All elements of $\mathcal{O}_K$ should be expressed as polynomials in $\zeta$ of degree less than $n$ with coefficients in $\mathbb{Z}_p$ and $\zeta$ is just a formal symbol.\par
Choose a positive integer $m\le n$. Choose a matrix $A\in {\rm GL}_m(\mathbb{Z}_p)$. Put
$$\left( \begin{array}{c} \beta_{1} \\ \beta_{2} \\ \vdots \\ \beta_{m} \end{array} \right)=A\left( \begin{array}{c} \alpha_{1} \\ \alpha_{2} \\ \vdots \\ \alpha_{m} \end{array} \right)$$
such that $\left|\beta_1\right|_p=\left|\beta_2\right|_p=\cdots=\left|\beta_m\right|_p=1$. Set
\begin{equation*}
\begin{split}
\mathcal{L}
&=\mathbb{Z}_p\cdot\beta_{1}+\mathbb{Z}_p\cdot\beta_{2}+\cdots+\mathbb{Z}_p\cdot\beta_{m}\\
&=\mathbb{Z}_p\cdot\alpha_{1}+\mathbb{Z}_p\cdot\alpha_{2}+\cdots+\mathbb{Z}_p\cdot\alpha_{m}.
\end{split}
\end{equation*}
We need a hash function
$$H:\{0,1\}^{*}\rightarrow W:=\left\{x\big|x\in K-\mathcal{L},\left|x\right|_{p}=1\right\},$$
\par
{\bf Public key} is set to be: $(F(X),H,(\beta_{1},\beta_{2},\dots,\beta_{m}))$.\par
{\bf Private key} is set to be: $(\alpha_1,\alpha_2,\dots,\alpha_m,\alpha_{m+1},\dots,\alpha_n)$.\par
\noindent {\bf Signing algorithm:} For any message $M\in\{0, 1\}^*$, choose a random number $r$ of fixed length, say, $r\in\{0, 1\}^{256}$. Compute
$$t=H(M\|r).$$
Using the orthogonal basis $(\alpha_1,\alpha_2,\dots,\alpha_m,\alpha_{m+1},\dots,\alpha_n)$, Bob computes a lattice vector $v\in\mathcal{L}$ which is closest to $t$. If the minimum value of the distance from $t$ to $\mathcal{L}$ is strictly less than $1$, then output the signature $(r,v)$. Otherwise, choose a new $r\in\{0,1\}^{256}$ until the minimum value of the distance from $t$ to $\mathcal{L}$ is strictly less than $1$.\par
\noindent {\bf Verification algorithm:} The signature is valid if and only if $t=H(M\|r)$, $v\in\mathcal{L}$ and $\left|t-v\right|_p<1$.\par
The correctness is obvious. For the efficiency, we can prove that  for a fixed $f(x)$, the probability of $\mathbb{Z}_p[\theta]=\mathbb{Z}_p[\zeta]$ is $1-\frac{1}{p}$. Also, we can always get a valid signature with just one $r$. See Lemma \ref{le-A.1} and Lemma \ref{le-A.2} in Appendix \ref{se-A}.

\subsubsection{The $p$-adic Public-key Encryption Cryptosystem}

{\bf Key Generation:} The same as in the signature scheme. Additionally, choose a real number $\delta\ge0$.\par
{\bf Public key} is set to be: $(F(X),\delta,(\beta_{1},\beta_{2},\dots,\beta_{m}))$.\par
{\bf Private key} is set to be: $(A,(\alpha_1,\alpha_2,\dots,\alpha_n))$.\par
\noindent {\bf Encryption:} For any plaintext $(a_1,a_2,\dots,a_m)\in\{0,1,\dots,p-1\}^m$, Alice first chooses randomly $r\in K$ with $\left|r\right|_p<p^{-\delta}$, computes the ciphertext
$$C=a_1\beta_1+a_2\beta_2+\cdots+ a_m\beta_m+r\in K$$
and sends $C$ to Bob.\par
\noindent {\bf Decryption:} When Bob receives the ciphertext $C$, he computes a lattice vector $v\in\mathcal{L}$ which is closest to $C$. Write
$$v=b_1\alpha_1+b_2\alpha_2+\cdots+b_m\alpha_m,$$
where $b_i\in\mathbb{Z}_p$, then the plaintext is
$$(b_1,b_2,\dots,b_m)\cdot A^{-1}\pmod p.$$\par
For the correctness, the following theorem is proved in \cite{ref-2}:
\begin{theorem}
The decryption is correct if it holds that $j_i\le\delta n$ for $1\le i\le m$.
\end{theorem}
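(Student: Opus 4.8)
The plan is to translate the whole decryption procedure into the orthogonal coordinates of the ciphertext with respect to $\alpha_1,\dots,\alpha_n$ and then to track only the first $m$ coordinates modulo $p$. First I would rewrite the message part in the orthogonal basis: since $(\beta_1,\dots,\beta_m)^{T}=A(\alpha_1,\dots,\alpha_m)^{T}$, setting $(c_1,\dots,c_m)=(a_1,\dots,a_m)A$ gives $\sum_{i=1}^{m}a_i\beta_i=\sum_{i=1}^{m}c_i\alpha_i$ with $c_i\in\mathbb{Z}_p$, because $a_i\in\mathbb{Z}_p$ and $A\in\mathrm{GL}_m(\mathbb{Z}_p)$. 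Since $\theta$ is a root of an Eisenstein polynomial and $K/\mathbb{Q}_p$ is totally ramified of degree $n$, $\theta$ is a uniformizer and $|\alpha_i|_p=|\theta|_p^{\,j_i}=p^{-j_i/n}$. Writing $r=\sum_{i=1}^{n}\rho_i\alpha_i$ with $\rho_i\in\mathbb{Q}_p$, the ciphertext becomes $C=\sum_{i=1}^{m}(c_i+\rho_i)\alpha_i+\sum_{i=m+1}^{n}\rho_i\alpha_i$, and I set $d_i=c_i+\rho_i$ for $i\le m$.

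Next I would use orthogonality to control the $\rho_i$. Because $\alpha_1,\dots,\alpha_n$ is an orthogonal basis (Proposition \ref{pr-2.3}), we have $|r|_p=\max_{1\le i\le n}|\rho_i|_p\,|\alpha_i|_p$, so every summand satisfies $|\rho_i|_p\,|\alpha_i|_p\le|r|_p<p^{-\delta}$. For $1\le i\le m$ this gives $|\rho_i|_p<p^{-\delta}/|\alpha_i|_p=p^{(j_i-\delta n)/n}$, and the hypothesis $j_i\le\delta n$ forces the exponent to be $\le 0$, hence $|\rho_i|_p<1$. As $|\rho_i|_p$ is an integral power of $p$ (or $0$), this yields $|\rho_i|_p\le p^{-1}$, i.e. $\rho_i\in p\mathbb{Z}_p$; in particular $d_i=c_i+\rho_i\in\mathbb{Z}_p$ and $d_i\equiv c_i\pmod p$.

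Finally I would analyze the closest-vector step. For any $v=\sum_{i=1}^{m}b_i\alpha_i\in\mathcal{L}$, orthogonality gives $|C-v|_p=\max\{\max_{i\le m}|d_i-b_i|_p\,|\alpha_i|_p,\ \max_{i>m}|\rho_i|_p\,|\alpha_i|_p\}$. The second term is independent of $v$ and equals some $D\le|r|_p<p^{-\delta}$, while the first term can be driven to $0$ by taking $b_i=d_i\in\mathbb{Z}_p$; hence the minimal distance is exactly $D$. The crucial observation, which must be argued for the entire solution set rather than for one canonical output of the algorithm, is that every closest vector satisfies $\max_{i\le m}|d_i-b_i|_p\,|\alpha_i|_p\le D<p^{-\delta}$, so for each $i\le m$ we get $|d_i-b_i|_p<p^{(j_i-\delta n)/n}\le 1$, i.e. $b_i\equiv d_i\equiv c_i\pmod p$. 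Therefore $(b_1,\dots,b_m)\equiv(c_1,\dots,c_m)\pmod p$, and since $A^{-1}\in\mathrm{GL}_m(\mathbb{Z}_p)$, reduction commutes with multiplication, giving $(b_1,\dots,b_m)A^{-1}\equiv(c_1,\dots,c_m)A^{-1}=(a_1,\dots,a_m)\pmod p$; as $a_i\in\{0,\dots,p-1\}$, this recovers the plaintext exactly.

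The main obstacle I anticipate is precisely the non-uniqueness of the closest lattice vector when $D>0$: I must show that the recovered coordinates are correct modulo $p$ for all solutions of the CVP instance, which is exactly what the strict inequality $|r|_p<p^{-\delta}$ combined with $j_i\le\delta n$ guarantees. A secondary technical point is justifying $|\alpha_i|_p=p^{-j_i/n}$ and the orthogonal norm identity $|r|_p=\max_i|\rho_i|_p\,|\alpha_i|_p$, both of which follow from $\theta$ being a uniformizer of the totally ramified extension and from the orthogonality of $\alpha_1,\dots,\alpha_n$.
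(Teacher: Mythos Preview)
The paper does not actually contain a proof of this theorem: it merely quotes the statement and attributes the proof to \cite{ref-2}. Your argument is correct and self-contained; the orthogonal-coordinate decomposition, the bound $|\rho_i|_p<p^{(j_i-\delta n)/n}\le 1$ for $i\le m$, and the observation that \emph{every} closest lattice vector must agree with $(c_1,\dots,c_m)$ modulo $p$ are exactly the right ingredients, and this is almost certainly the approach taken in \cite{ref-2} as well.
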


\section{Orthogonalization in Local Fields}\label{se-3}

In \cite{ref-1',ref-1}, Deng et al. provided deterministic algorithms to solve the LVP and the CVP. The running time of the LVP algorithm is exponential in the dimension of the input lattice. On the other hand, there are no known deterministic polynomial time algorithms to solve the LVP or the CVP. However, we can take advantage of the special structure of local fields to simplify these algorithms.\par
As mentioned in \cite{ref-2} and \cite{ref-3.5}, the LVP and the CVP can be solved in polynomial time with the help of an orthogonal basis. Therefore, we consider orthogonalization algorithms directly. In totally ramified fields, the simplified orthogonalization algorithm takes $O(pm^2)$ many $p$-adic absolute value computations of elements of $K$, where $m$ is the rank of the input lattice. It is a polynomial time algorithm when $p$ is a polynomial in $m$.

\subsection{Orthogonalization in Totally Ramified Fields}

Let $K$ be an extension field of $\mathbb{Q}_p$ of degree $n$. Let $\pi$ be a uniformizer of $K$. If $K$ is a totally ramified extension field over $\mathbb{Q}_p$, then $\left|\pi\right|_p=p^{-\frac{1}{n}}$ and
$$\mathcal{O}_K=\mathbb{Z}_p[\pi]=\mathcal{L}(1,\pi,\pi^2,\dots,\pi^{n-1}).$$
By Proposition 3.4 in \cite{ref-2}, $1,\pi,\pi^2,\dots,\pi^{n-1}$ is an orthogonal basis of $\mathcal{O}_K$. The crucial point is that this orthogonal basis satisfies
$$\left|1\right|_p>\left|\pi\right|_p>\left|\pi^2\right|_p>\cdots>\left|\pi^{n-1}\right|_p>\left|p\cdot1\right|_p.$$
If a lattice has an orthogonal basis with the above property, then the orthogonalization process of this lattice will be extremely simplified, as shown in the following theorems.
\begin{lemma}\label{le-4.1}
Let $\mathcal{L}=\mathcal{L}(\alpha_1,\alpha_2,\dots,\alpha_m)$ be a lattice in $K$ $(m\ge2)$, such that $\left|\alpha_1\right|_p>\left|\alpha_2\right|_p\ge\cdots\ge\left|\alpha_m\right|_p$. Then $\lambda_2=\max{\left\{\left|p\alpha_1\right|_p,\left|\alpha_2\right|_p\right\}}$.
\end{lemma}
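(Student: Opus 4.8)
The plan is to work directly from the definition $\lambda_2=\max\{|v|_p:v\in\mathcal{L},\ |v|_p<\lambda_1\}$, where the hypothesis $|\alpha_1|_p>|\alpha_2|_p\ge\cdots\ge|\alpha_m|_p$ together with the definition of $\lambda_1$ immediately gives $\lambda_1=|\alpha_1|_p$. I would then establish the claimed equality by proving two matching bounds: an upper bound showing that every lattice vector of norm strictly less than $|\alpha_1|_p$ has norm at most $\max\{|p\alpha_1|_p,|\alpha_2|_p\}$, and an attainment statement exhibiting an explicit lattice vector of exactly that norm which is admissible in the definition of $\lambda_2$.

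For the upper bound, I would take an arbitrary $v=\sum_{i=1}^m a_i\alpha_i\in\mathcal{L}$ (so $a_i\in\mathbb{Z}_p$) with $|v|_p<\lambda_1$ and split on whether $a_1$ is a unit. The key observation is the dichotomy for coefficients in $\mathbb{Z}_p$: either $a_1\in\mathbb{Z}_p^{\times}$ and $|a_1|_p=1$, or $a_1$ is a non-unit and $|a_1|_p\le p^{-1}$, since $\mathbb{Z}_p$ is a discrete valuation ring with uniformizer $p$. If $a_1$ were a unit, then $|a_1\alpha_1|_p=|\alpha_1|_p=\lambda_1$, while every other term satisfies $|a_i\alpha_i|_p\le|\alpha_i|_p\le|\alpha_2|_p<|\alpha_1|_p$; the strict form of the ultrametric inequality would then force $|v|_p=\lambda_1$, contradicting $|v|_p<\lambda_1$. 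Hence $a_1$ must be a non-unit, so $|a_1\alpha_1|_p\le p^{-1}|\alpha_1|_p=|p\alpha_1|_p$, and combining this with $|a_i\alpha_i|_p\le|\alpha_2|_p$ for $i\ge2$ yields $|v|_p\le\max\{|p\alpha_1|_p,|\alpha_2|_p\}$ by the ultrametric inequality.

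For attainment, I would note that both $p\alpha_1$ and $\alpha_2$ are genuine lattice vectors (as $p\in\mathbb{Z}_p$), and that each is strictly shorter than $\alpha_1$: indeed $|p\alpha_1|_p=p^{-1}|\alpha_1|_p<\lambda_1$ and $|\alpha_2|_p<|\alpha_1|_p=\lambda_1$. Whichever of these two realizes $\max\{|p\alpha_1|_p,|\alpha_2|_p\}$ is therefore an admissible competitor in the definition of $\lambda_2$, giving $\lambda_2\ge\max\{|p\alpha_1|_p,|\alpha_2|_p\}$. Together with the upper bound, this proves the equality.

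I expect no serious obstacle here; the one step requiring care is the upper-bound argument, whose conclusion rests on the refinement of the ultrametric inequality stating that $|x+y|_p=\max\{|x|_p,|y|_p\}$ whenever $|x|_p\ne|y|_p$. This refinement is exactly what rules out any cancellation that could push the norm below $\lambda_1$ when $a_1$ is a unit, and it is also what guarantees that the maximum appearing in the upper bound is genuinely achieved rather than merely a loose bound.
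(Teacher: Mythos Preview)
Your proposal is correct and follows essentially the same route as the paper: split on whether $a_1\in\mathbb{Z}_p^\times$ or $a_1\in p\mathbb{Z}_p$, use the strict ultrametric inequality to rule out the first case under $|v|_p<\lambda_1$, bound the second case by $\max\{|p\alpha_1|_p,|\alpha_2|_p\}$, and then exhibit $p\alpha_1$ and $\alpha_2$ as witnesses for attainment. The only cosmetic difference is that the paper phrases the upper bound as ``no lattice vector has norm strictly between $\max\{|p\alpha_1|_p,|\alpha_2|_p\}$ and $\lambda_1$'' rather than as a direct inequality, but the content is identical.
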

\begin{proof}
Let $v=\sum_{i=1}^{m}{a_i\alpha_i}$ be a lattice vector in $\mathcal{L}$. If $a_1\in\mathbb{Z}_p-p\mathbb{Z}_p$, then $\left|v\right|_p=\left|\alpha_1\right|_p=\lambda_1$. If $a_1\in p\mathbb{Z}_p$, then
$$\left|v\right|_p\le\max{\left\{\left|a_1\alpha_1\right|_p,\left|\sum_{i=2}^{m}{a_i\alpha_i}\right|_p\right\}}\le\max{\left\{\left|p\alpha_1\right|_p,\left|\alpha_2\right|_p\right\}}.$$
Therefore, there is no lattice vector $v\in\mathcal{L}$ such that
$$\max{\left\{\left|p\alpha_1\right|_p,\left|\alpha_1\right|_p\right\}}<\left|v\right|_p<\lambda_1.$$
On the other hand, there exists lattice vector in $\mathcal{L}$ with length $\left|p\alpha_1\right|_p$ or $\left|\alpha_2\right|_p$, namely $p\alpha_1$ and $\alpha_2$. Hence $\lambda_2=\max{\left\{\left|p\alpha_1\right|_p,\left|\alpha_2\right|_p\right\}}$.
\end{proof}

\begin{theorem}\label{th-4.2}
Let $\mathcal{L}=\mathcal{L}(\beta_1,\beta_2,\dots,\beta_m)$ be a lattice in $K$ $(m\ge2)$. Suppose that $\mathcal{L}$ has a basis $\alpha_1,\alpha_2,\dots,\alpha_m$ such that $\left|\alpha_1\right|_p>\left|\alpha_2\right|_p>\cdots>\left|\alpha_m\right|_p>\left|p\alpha_1\right|_p$. Then there is a deterministic algorithm to find the value $\lambda_2$ and a lattice vector $v\in\mathcal{L}$ such that $\left|v\right|_p=\lambda_2$. The algorithm takes $O(pm)$ many $p$-adic absolute value computations of elements of $K$.
\end{theorem}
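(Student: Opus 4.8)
The plan is to turn the structural description of $\lambda_2$ given by Lemma \ref{le-4.1} into an algorithm that operates on the \emph{given} basis $\beta_1,\dots,\beta_m$, since the hypothesized basis $\alpha_1,\dots,\alpha_m$ is only assumed to exist and is not available as input. First I would record the basis-independent fact that, by the ultrametric inequality together with $|a_i|_p\le1$ for $a_i\in\mathbb{Z}_p$, the longest length is $\lambda_1=\max_{1\le i\le m}|\beta_i|_p=|\alpha_1|_p$, computed in $m$ absolute value evaluations. Applying Lemma \ref{le-4.1} to the orthogonal basis $\alpha_1,\dots,\alpha_m$ itself gives $\lambda_2=\max\{|p\alpha_1|_p,|\alpha_2|_p\}=|\alpha_2|_p$, and the gap hypothesis $|\alpha_m|_p>|p\alpha_1|_p$ forces $\lambda_2=|\alpha_2|_p>|p\alpha_1|_p$. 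This strict inequality is what will make a single ``level'' of reduction suffice.

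The algorithm I propose is: select an index with $|\beta_i|_p=\lambda_1$ and relabel it $\beta_1$; then for every other index $j$ with $|\beta_j|_p=\lambda_1$, replace $\beta_j$ by $\beta_j-c_j\beta_1$, where $c_j\in\{0,1,\dots,p-1\}$ is located by trying all $p$ residues and keeping the one for which $|\beta_j-c_j\beta_1|_p<\lambda_1$. Each replacement is an elementary unimodular operation, so the family remains a basis of $\mathcal{L}$. To prove that such a $c_j$ exists and is unique I would pass to the orthogonal coordinates $\beta_i=\sum_k A_{ik}\alpha_k$ with $A\in\mathrm{GL}_m(\mathbb{Z}_p)$: orthogonality gives $|\beta_i|_p=\max_k|A_{ik}|_p|\alpha_k|_p$, so $|\beta_i|_p=\lambda_1$ is equivalent to $A_{i1}\in\mathbb{Z}_p^{\times}$, and the $\alpha_1$-coefficient $A_{j1}-c\,A_{11}$ of $\beta_j-c\beta_1$ lies in $p\mathbb{Z}_p$ for exactly one residue $c\equiv A_{j1}A_{11}^{-1}\pmod p$. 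For that $c$ the $\alpha_1$-term drops to norm $\le|p\alpha_1|_p$ while every remaining term has norm $\le|\alpha_2|_p<\lambda_1$, whence $|\beta_j-c_j\beta_1|_p<\lambda_1$; for any other $c$ the $\alpha_1$-coefficient stays a unit and the norm remains $\lambda_1$.

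After this pass $\beta_1$ is the unique basis vector of norm $\lambda_1$ and all others have norm $<\lambda_1$. Sorting the remaining vectors by decreasing norm (free comparisons, no new evaluations) and applying Lemma \ref{le-4.1} to this valid basis yields $\lambda_2=\max\{|p\beta_1|_p,\max_{j\ge2}|\beta_j|_p\}$. Since I already established $\lambda_2>|p\alpha_1|_p=|p\beta_1|_p$, the maximum must be attained by the tail, so $\lambda_2=\max_{j\ge2}|\beta_j|_p$ and the vector $\beta_{j^{*}}$ realizing it satisfies $|\beta_{j^{*}}|_p=\lambda_2$; I would output this value and vector. For the cost, Step~1 uses $m$ evaluations, each of the at most $m-1$ reductions tries at most $p$ residues at one evaluation each, and selecting the tail maximum reuses already-computed norms, giving $O(pm)$ absolute value computations in all.

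The only real obstacle I anticipate is the correctness of the reduction, namely justifying the existence and uniqueness of $c_j$ and the resulting norm drop: this must be argued entirely in the $\alpha$-coordinates even though the algorithm never sees them, and it is exactly the gap hypothesis $|\alpha_m|_p>|p\alpha_1|_p$ that guarantees one reduction pass is enough and that the answer is a genuine tail vector rather than $p\beta_1$. The remaining ingredients, unimodularity preserving $\mathcal{L}$ and the operation count, are routine.
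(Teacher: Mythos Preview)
Your proposal is correct and follows essentially the same route as the paper: pass to the unknown $\alpha$-coordinates via the transition matrix $A\in\mathrm{GL}_m(\mathbb{Z}_p)$, identify the $\beta_i$ of norm $\lambda_1$ as those with unit first coordinate, reduce all but one of them by subtracting the appropriate residue multiple of a fixed longest $\beta_1$, and then read off $\lambda_2$ via Lemma~\ref{le-4.1}. The only substantive difference is in the last step, where you rule out $p\beta_1$ by computing $\lambda_2=|\alpha_2|_p>|p\alpha_1|_p$ directly from Lemma~\ref{le-4.1} applied to the $\alpha$-basis, whereas the paper argues by contradiction that if every reduced $\beta_j'$ had norm below $|p\alpha_1|_p$ then the second column of $A'$ would lie in $p\mathbb{Z}_p$, forcing $\det A'\in p\mathbb{Z}_p$; your version is a bit more direct but the content is the same.
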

\begin{proof}
Since
$$\left\{\left|v_i\right|:v_i\in\mathbb{Q}_p\cdot\alpha_i\right\}=\{0\}\cup p^{\mathbb{Z}}\cdot\left|\alpha_i\right|$$
and
$$\left|\alpha_1\right|_p>\left|\alpha_2\right|_p>\cdots>\left|\alpha_m\right|_p>\left|p\alpha_1\right|_p,$$
we have
$$\left\{\left|v_i\right|:v_i\in\mathbb{Q}_p\cdot\alpha_i\right\}\bigcap\left\{\left|v_j\right|:v_j\in\mathbb{Q}_p\cdot\alpha_j\right\}=\left\{0\right\}$$
for any $i,j=1,2,\dots,n$, $i\ne j$. Therefore, by Proposition \ref{pr-2.3}, $\alpha_1,\alpha_2,\dots,\alpha_m$ is an orthogonal basis of $\mathcal{L}$. Let $A=(a_{ij})\in {\rm GL}_m(\mathbb{Z}_p)$ be an invertible matrix over $\mathbb{Z}_p$ such that
\begin{equation*}
\left(\begin{array}{c}\beta_1\\ \beta_2\\ \vdots\\ \beta_m \end{array}\right)
=A\left(\begin{array}{c}\alpha_1\\ \alpha_2\\ \vdots\\ \alpha_m \end{array}\right).
\end{equation*}
Then we have
\begin{equation*}
\begin{array}{c}
\beta_1=a_{11}\alpha_1+a_{12}\alpha_2+\cdots+a_{1m}\alpha_m\\
\beta_2=a_{21}\alpha_1+a_{22}\alpha_2+\cdots+a_{2m}\alpha_m\\
\vdots\\
\beta_m=a_{m1}\alpha_1+a_{m2}\alpha_2+\cdots+a_{mm}\alpha_m.
\end{array}
\end{equation*}
If $a_{i1}\in\mathbb{Z}_p-p\mathbb{Z}_p$, then we have $\left|\beta_i\right|_p=\left|\alpha_1\right|_p=\lambda_1$. If $a_{i1}\in p\mathbb{Z}_p$, then we have $\left|\beta_i\right|_p\le\left|\alpha_2\right|_p<\lambda_1$. Since the longest vector of $\mathcal{L}$ is $\alpha_1$, there must be at least one $\beta_i$ such that $\left|\beta_i\right|_p=\lambda_1$. Hence, we may assume that for $1\le i\le l$, $a_{i1}\in\mathbb{Z}_p-p\mathbb{Z}_p$ and for $l+1\le i\le m$, $a_{i1}\in p\mathbb{Z}_p$. If $l=1$, then
$$\lambda_2=\max{\left\{\left|\beta_2\right|_p,\left|\beta_3\right|_p,\dots,\left|\beta_m\right|_p\right\}}.$$
We can find the second longest vector $v$ directly from $\beta_2,\beta_3,\dots,\beta_m$. Now we assume that $l\ge2$. Then for each $2\le i\le l$, there is an integer $c_i\in\{0,1,\dots,p-1\}$ such that $a_{i1}-c_ia_{11}\in p\mathbb{Z}_p$ and hence
$$\left|\beta_i-c_i\beta_1\right|_p\le\left|\alpha_2\right|_p<\lambda_1.$$
Let $\beta_i^{\prime}=\beta_i-c_i\beta_1$ for $2\le i\le l$ and $\beta_i^{\prime}=\beta_i$ for the remaining $1\le i\le m$. Then
$$\mathcal{L}(\beta_1,\beta_2,\dots,\beta_m)=\mathcal{L}(\beta_1^{\prime},\beta_2^{\prime},\dots,\beta_m^{\prime}),$$
and
$$\left|\beta_1^{\prime}\right|_p=\left|\alpha_1\right|_p=\lambda_1>\max{\left\{\left|\beta_2^{\prime}\right|_p,\left|\beta_3^{\prime}\right|_p,\dots,\left|\beta_m^{\prime}\right|_p\right\}}.$$
Let $2\le s\le m$ be the subscript such that
$$\left|\beta_s^{\prime}\right|_p=\max{\left\{\left|\beta_2^{\prime}\right|_p,\left|\beta_3^{\prime}\right|_p,\dots,\left|\beta_m^{\prime}\right|_p\right\}}.$$
By Lemma \ref{le-4.1}, we have $\lambda_2=\max{\left\{\left|p\beta_1^{\prime}\right|_p,\left|\beta_s^{\prime}\right|_p\right\}}$ and $v=p\beta_1$ or $\beta_s^{\prime}$. Let $A^{\prime}=(a_{ij}^{\prime})\in {\rm GL}_m(\mathbb{Z}_p)$ be an invertible matrix over $\mathbb{Z}_p$ such that
\begin{equation*}
\left(\begin{array}{c}\beta_1^{\prime}\\ \beta_2^{\prime}\\ \vdots\\ \beta_m^{\prime} \end{array}\right)
=A^{\prime}\left(\begin{array}{c}\alpha_1\\ \alpha_2\\ \vdots\\ \alpha_m \end{array}\right).
\end{equation*}
Then we have
\begin{equation*}
\begin{array}{c}
\beta_1^{\prime}=a_{11}^{\prime}\alpha_1+a_{12}^{\prime}\alpha_2+\cdots+a_{1m}^{\prime}\alpha_m\\
\beta_2^{\prime}=a_{21}^{\prime}\alpha_1+a_{22}^{\prime}\alpha_2+\cdots+a_{2m}^{\prime}\alpha_m\\
\vdots\\
\beta_m^{\prime}=a_{m1}^{\prime}\alpha_1+a_{m2}^{\prime}\alpha_2+\cdots+a_{mm}^{\prime}\alpha_m,
\end{array}
\end{equation*}
where $a_{i1}^{\prime}=a_{i1}-c_ia_{11}\in p\mathbb{Z}_p$ for $2\le i\le m$.
If $\left|\beta_s^{\prime}\right|_p<\left|p\beta_1^{\prime}\right|_p$, then we have
$$\left|\beta_s^{\prime}\right|_p<\left|p\beta_1^{\prime}\right|_p=\left|p\alpha_1\right|_p<\left|\alpha_2\right|_p.$$
Hence $a_{i2}^{\prime}\in p\mathbb{Z}_p$ for $2\le i\le m$, which implies that $\det(A^{\prime})\in p\mathbb{Z}_p$ and $A^{\prime}$ is not invertible over $\mathbb{Z}_p$. This is a contradiction. Therefore, we must have $\left|\beta_s^{\prime}\right|_p\ge\left|p\beta_1^{\prime}\right|_p$, $\lambda_2=\left|\beta_s^{\prime}\right|_p$ and a second longest vector is $v=\beta_s^{\prime}$.
\end{proof}

We summarize the algorithm in above theorem as follows.


\begin{algorithm}
\caption{finding $\lambda_2$}\label{al-1}
\begin{algorithmic}[1]
\Require a lattice $\mathcal{L}=\mathcal{L}(\beta_1,\beta_2,\dots,\beta_m)$
\Ensure the value $\lambda_2$ and a lattice vector $v\in\mathcal{L}$ such that $\left|v\right|_p=\lambda_2$ 
\State rearrange $\beta_1,\beta_2,\dots,\beta_m$ such that  $\left|\beta_1\right|_p=\max{\left\{\left|\beta_1\right|_p,\left|\beta_2\right|_p,\dots,\left|\beta_m\right|_p\right\}}$
\State $\beta_1^{\prime}\Leftarrow\beta_1$
\For{$i=2$ to $m$}
 	\For{$j=0$ to $p-1$}
		\If{$\left|\beta_i-j\beta_1\right|_p<\left|\beta_1\right|_p$}
			\State $\beta_i^{\prime}\Leftarrow\beta_i-j\beta_1$
		\EndIf
	\EndFor
\EndFor
\State let $s$ be the subscript such that
$\left|\beta_s^{\prime}\right|_p=\max{\left\{\left|\beta_2^{\prime}\right|_p,\left|\beta_3^{\prime}\right|_p,\dots,\left|\beta_m^{\prime}\right|_p\right\}}$
\State\Return $\lambda_2\Leftarrow\left|\beta_s^{\prime}\right|_p$, $v\Leftarrow\beta_s^{\prime}$
\end{algorithmic}
\end{algorithm}

This algorithm takes at most $m+p(m-1)$ many $p$-adic absolute value computations of elements of $K$. Notice that this algorithm also computes a new basis $\beta_1^{\prime},\beta_2^{\prime},\dots,\beta_m^{\prime}$ of the lattice $\mathcal{L}$ such that
$$\left|\beta_1^{\prime}\right|_p=\left|\alpha_1\right|_p>\left|\alpha_{2}\right|_p\ge\max{\left\{\left|\beta_{2}^{\prime}\right|_p,\left|\beta_{3}^{\prime}\right|_p,\dots,\left|\beta_{m}^{\prime}\right|_p\right\}},$$
where the meanings of $\alpha_1$ and $\alpha_2$ are the same as those in Theorem \ref{th-4.2}. The following theorem shows that we can recursively run this algorithm to obtain an orthogonal basis of the lattice $\mathcal{L}$.

\begin{theorem}
Let $\mathcal{L}=\mathcal{L}(\beta_1,\beta_2,\dots,\beta_m)$ be a lattice in $K$ $(m\ge2)$. Suppose that $\mathcal{L}$ has a basis $\alpha_1,\alpha_2,\dots,\alpha_m$ such that $\left|\alpha_1\right|_p>\left|\alpha_2\right|_p>\cdots>\left|\alpha_m\right|_p>\left|p\alpha_1\right|_p$. Then there is a deterministic algorithm to find an orthogonal basis of $\mathcal{L}$. The algorithm takes $O(pm^2)$ many $p$-adic absolute value computations of elements of $K$.
\end{theorem}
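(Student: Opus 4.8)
The plan is to argue by induction on the rank $m$, with Theorem~\ref{th-4.2} (equivalently Algorithm~\ref{al-1}) as the single subroutine invoked at each stage. Concretely, I would prove the slightly stronger claim that the procedure returns an orthogonal basis $\delta_1,\dots,\delta_m$ of $\mathcal{L}$ with $|\delta_i|_p=|\alpha_i|_p$ for every $i$. The base case $m=2$ is exactly Theorem~\ref{th-4.2}. For the inductive step I would first run Algorithm~\ref{al-1} on $\mathcal{L}=\mathcal{L}(\beta_1,\dots,\beta_m)$; by the remark following that algorithm this produces, in $O(pm)$ absolute value computations, a new basis $\beta_1',\dots,\beta_m'$ of $\mathcal{L}$ with $|\beta_1'|_p=|\alpha_1|_p=\lambda_1$ and $\max_{2\le i\le m}|\beta_i'|_p\le|\alpha_2|_p<|\beta_1'|_p$. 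I would set $\delta_1:=\beta_1'$ and recurse on the rank-$(m-1)$ sublattice $\mathcal{L}':=\mathcal{L}(\beta_2',\dots,\beta_m')$, so that $\mathcal{L}=\mathbb{Z}_p\delta_1+\mathcal{L}'$.

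The main obstacle is to verify that $\mathcal{L}'$ again satisfies the hypothesis of the theorem, which is what lets the induction fire. Writing $\beta_i'=\sum_{j=1}^m a_{ij}'\alpha_j$ with $(a_{ij}')\in\mathrm{GL}_m(\mathbb{Z}_p)$, the construction in Theorem~\ref{th-4.2} forces $a_{11}'\in\mathbb{Z}_p^{\times}$ and $a_{i1}'\in p\mathbb{Z}_p$ for $2\le i\le m$; a cofactor expansion along the first column then shows that the trailing block $B:=(a_{ij}')_{2\le i,j\le m}$ lies in $\mathrm{GL}_{m-1}(\mathbb{Z}_p)$. I would use this to compare $\mathcal{L}'$ with $\mathcal{L}(\alpha_2,\dots,\alpha_m)$: each nonzero $w=\sum_{i=2}^m c_i\beta_i'$ has $\alpha_1$-coefficient in $p\mathbb{Z}_p$, so its $\alpha_1$-component has norm at most $|p\alpha_1|_p$; by orthogonality of $\alpha_1,\dots,\alpha_m$ the norm of $w$ can exceed $|p\alpha_1|_p$ only when the $\alpha_2,\dots,\alpha_m$-part dominates, and that part ranges over all of $\mathcal{L}(\alpha_2,\dots,\alpha_m)$ because $B\in\mathrm{GL}_{m-1}(\mathbb{Z}_p)$. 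Hence $\mathcal{L}'$ and $\mathcal{L}(\alpha_2,\dots,\alpha_m)$ realize exactly the same vector-norms above $|p\alpha_1|_p$, namely the $m-1$ distinct values $|\alpha_2|_p>\cdots>|\alpha_m|_p$.

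From this coincidence of norm-spectra I would pin down an orthogonal basis of $\mathcal{L}'$. Every $p$-adic lattice has an orthogonal basis, and by Theorem~\ref{th-2.10} its sorted norm sequence is unique; so fix such a basis $\gamma_2,\dots,\gamma_m$ of $\mathcal{L}'$ with $|\gamma_2|_p\ge\cdots\ge|\gamma_m|_p$. Since every norm realized by $\mathcal{L}'$ is at most $\lambda_1=|\alpha_2|_p<|\alpha_1|_p$, a non-unit coefficient cannot lift any $|\gamma_i|_p$ above $|p\alpha_1|_p$; consequently the distinct norms exceeding $|p\alpha_1|_p$ realized by $\mathcal{L}'$ are exactly the distinct values among the $|\gamma_i|_p$ that exceed $|p\alpha_1|_p$. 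As this count must be $m-1$ while there are only $m-1$ vectors $\gamma_i$, all of them exceed $|p\alpha_1|_p$ and have pairwise distinct norms, forcing $|\gamma_i|_p=|\alpha_i|_p$. In particular $|\gamma_2|_p>\cdots>|\gamma_m|_p>|p\alpha_1|_p>|p\alpha_2|_p=|p\gamma_2|_p$, which is precisely the hypothesis at rank $m-1$.

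Applying the inductive hypothesis to $\mathcal{L}'$ yields $\delta_2,\dots,\delta_m$ with $|\delta_i|_p=|\alpha_i|_p$, and then $\delta_1,\dots,\delta_m$ is a $\mathbb{Z}_p$-generating set of $\mathcal{L}=\mathbb{Z}_p\delta_1+\mathcal{L}'$ with $m$ elements, hence a basis of the rank-$m$ lattice. To conclude orthogonality in one stroke I would invoke Proposition~\ref{pr-2.3}: the norms $|\delta_1|_p>\cdots>|\delta_m|_p$ all lie in the single block bounded below by $|p\alpha_1|_p=p^{-1}|\alpha_1|_p$, so every ratio $|\delta_i|_p/|\delta_j|_p$ with $i\ne j$ lies strictly between $p^{-1}$ and $p$ and is never a power of $p$; therefore $\{|v|_p:v\in\mathbb{Q}_p\delta_i\}\cap\{|v|_p:v\in\mathbb{Q}_p\delta_j\}=\{0\}$ and Proposition~\ref{pr-2.3} gives that $\delta_1,\dots,\delta_m$ is an orthogonal basis. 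Finally, the recursion calls Algorithm~\ref{al-1} once at each rank $k=m,m-1,\dots,2$ at cost $O(pk)$, for a total of $\sum_{k=2}^{m}O(pk)=O(pm^2)$ absolute value computations, as claimed.
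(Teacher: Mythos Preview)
Your proposal is correct and follows the same overall recursive strategy as the paper: peel off one longest vector using Algorithm~\ref{al-1}/Theorem~\ref{th-4.2}, recurse on the remaining rank-$(m-1)$ sublattice, and at the end invoke Proposition~\ref{pr-2.3} on the resulting basis whose norms are exactly $|\alpha_1|_p,\dots,|\alpha_m|_p$. The complexity count is also identical.

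The only substantive difference is in how the recursion is justified. You show that the sublattice $\mathcal{L}'=\mathcal{L}(\beta_2',\dots,\beta_m')$ again satisfies the hypothesis by comparing its norm spectrum above $|p\alpha_1|_p$ with that of $\mathcal{L}(\alpha_2,\dots,\alpha_m)$ via the $\mathrm{GL}_{m-1}(\mathbb{Z}_p)$ block $B$, and then appeal to the existence of orthogonal bases together with Theorem~\ref{th-2.10} to force any orthogonal basis of $\mathcal{L}'$ to have exactly the norms $|\alpha_2|_p,\dots,|\alpha_m|_p$. The paper instead maintains a loop invariant directly on the full lattice: assuming the first $l$ output vectors already have norms $|\alpha_1|_p,\dots,|\alpha_l|_p$ and the rest lie below $|\alpha_{l+1}|_p$, it argues by contradiction (using only the orthogonality of the $\alpha_i$) that in fact $\max_{i>l}|\beta_i'|_p=|\alpha_{l+1}|_p$, so one more pass of Algorithm~\ref{al-1} advances $l$. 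The paper's route is a bit shorter and avoids Theorem~\ref{th-2.10}; your route makes the inductive hypothesis on the sublattice completely explicit, which is arguably cleaner pedagogically. Either way the algorithm and its analysis are the same.
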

\begin{proof}
Recursively run the algorithm in Theorem \ref{th-4.2}. Suppose that we have already obtained a basis $\beta_1^{\prime},\beta_2^{\prime},\dots,\beta_m^{\prime}$ of the lattice $\mathcal{L}$ such that
\begin{equation*}
\begin{split}
\left|\beta_1^{\prime}\right|_p&=\left|\alpha_1\right|_p>\left|\beta_2^{\prime}\right|_p=\left|\alpha_2\right|_p>\cdots>\left|\beta_l^{\prime}\right|_p=\left|\alpha_l\right|_p\\
&>\left|\alpha_{l+1}\right|_p\ge\max{\left\{\left|\beta_{l+1}^{\prime}\right|_p,\left|\beta_{l+2}^{\prime}\right|_p,\dots,\left|\beta_{m}^{\prime}\right|_p\right\}}.
\end{split}
\end{equation*}
for some $1\le l\le m$. If
$$\left|\alpha_{l+1}\right|_p>\max{\left\{\left|\beta_{l+1}^{\prime}\right|_p,\left|\beta_{l+2}^{\prime}\right|_p,\dots,\left|\beta_{m}^{\prime}\right|_p\right\}},$$
then for any lattice vector $v=\sum_{i=1}^{m}{a_i\beta_i^{\prime}}$, where $a_i\in\mathbb{Z}_p$, $1\le i\le m$, we have
$$\left|v\right|_p=\left|\sum_{i=1}^{m}{a_i\beta_i^{\prime}}\right|_p\ge\left|\beta_{s}^{\prime}\right|_p=\left|\alpha_{s}\right|_p\ge\left|\alpha_{l}\right|_p>\left|\alpha_{l+1}\right|_p$$
whenever there is some $a_s\in\mathbb{Z}_p-p\mathbb{Z}_p$ with subscript $1\le s\le l$, and
\begin{equation*}
\begin{split}
\left|v\right|_p&\le\max{\left\{\left|p\sum_{i=1}^{l}{\frac{a_i}{p}\beta_i^{\prime}}\right|,\left|\sum_{i=l+1}^{m}{a_i\beta_i^{\prime}}\right|_p\right\}}\\
&\le\max{\left\{\left|p\alpha_{1}\right|_p,\left|\beta_{l+1}^{\prime}\right|_p,\left|\beta_{l+2}^{\prime}\right|_p,\dots,\left|\beta_{m}^{\prime}\right|_p\right\}}<\left|\alpha_{l+1}\right|_p
\end{split}
\end{equation*}
whenever $a_i\in\mathbb{Z}_p$ for all $1\le i \le l$. Hence there is no lattice vector in the lattice $\mathcal{L}(\beta_1^{\prime},\beta_2^{\prime},\dots,\beta_m^{\prime})= \mathcal{L}(\alpha_1,\alpha_2,\dots,\alpha_m)$ with length $\left|\alpha_{l+1}\right|_p$. This is a contradiction. Thus, we must have
$$\left|\alpha_{l+1}\right|_p=\max{\left\{\left|\beta_{l+1}^{\prime}\right|_p,\left|\beta_{l+2}^{\prime}\right|_p,\dots,\left|\beta_{m}^{\prime}\right|_p\right\}}.$$
Run the algorithm with lattice $\mathcal{L}(\beta_{l+1}^{\prime},\beta_{l+2}^{\prime},\dots,\beta_{m}^{\prime})$, we obtain a basis
$$\beta_1^{\prime},\beta_2^{\prime},\dots,\beta_l^{\prime},\beta_{l+1}^{\prime\prime},\dots,\beta_{m}^{\prime\prime}$$
of the lattice $\mathcal{L}$ such that
\begin{equation*}
\begin{split}
\left|\beta_1^{\prime}\right|_p&=\left|\alpha_1\right|_p>\left|\beta_2^{\prime}\right|_p=\left|\alpha_2\right|_p>\cdots>\left|\beta_l^{\prime}\right|_p=\left|\alpha_l\right|_p\\
&>\left|\beta_{l+1}^{\prime\prime}\right|_p=\left|\alpha_{l+1}\right|_p>\left|\alpha_{l+2}\right|_p\ge\max{\left\{\left|\beta_{l+2}^{\prime\prime}\right|_p,\left|\beta_{l+3}^{\prime\prime}\right|_p,\dots,\left|\beta_{m}^{\prime\prime}\right|_p\right\}}.
\end{split}
\end{equation*}
Therefore, by induction, we finally obtain a basis $\gamma_1,\gamma_2,\dots,\gamma_m$ of the lattice $\mathcal{L}$ such that
$$\left|\gamma_1\right|_p=\left|\alpha_1\right|_p>\left|\gamma_2\right|_p=\left|\alpha_2\right|_p>\cdots>\left|\gamma_m\right|_p=\left|\alpha_m\right|_p.$$
By Proposition \ref{pr-2.3}, it is an orthogonal basis of $\mathcal{L}$.
\end{proof}

Let $A$ be an algorithm having the same input and procedure as Algorithm \ref{al-1}, but outputting the new basis, i.e., $A(\beta_1,\beta_2,\dots,\beta_m)=\beta_1^{\prime},\beta_2^{\prime},\dots,\beta_m^{\prime}$. We summarize the algorithm in above theorem as follows.


\begin{algorithm}
\caption{finding an orthogonal basis}\label{al-2}
\begin{algorithmic}[1]
\Require a lattice $\mathcal{L}=\mathcal{L}(\beta_1,\beta_2,\dots,\beta_m)$
\Ensure an orthogonal basis of $\mathcal{L}(\beta_1,\beta_2,\dots,\beta_m)$
\State $\beta_1^{\prime},\beta_2^{\prime},\dots,\beta_m^{\prime}\Leftarrow\beta_1,\beta_2,\dots,\beta_m$
\For{$i=1$ to $m-1$}
	\State $\beta_{i}^{\prime},\beta_{i+1}^{\prime},\dots,\beta_m^{\prime}\Leftarrow A(\beta_{i}^{\prime},\beta_{i+1}^{\prime},\dots,\beta_m^{\prime})$
\EndFor
\State\Return $\beta_1^{\prime},\beta_2^{\prime},\dots,\beta_m^{\prime}$
\end{algorithmic}
\end{algorithm}

This algorithm calls the subroutine $A$ for $m-1$ times, hence takes at most $m(m-1)+p(m-1)^2$ many $p$-adic absolute value computations of elements of $K$.

\subsection{LVP in General Local Fields}\label{se-4.1.2}

Let $K$ be an extension field of of $\mathbb{Q}_p$ degree $n=ef$, where $e$ is the ramification index and $f$ is the residue degree. Let $\pi$ be a uniformizer of $K$, then $\left|\pi\right|_p=p^{-\frac{1}{e}}$. According to Proposition 6.4.5 in \cite{ref-4.2}, we can choose $f$ units $s_1,s_2,\dots,s_f\in\mathcal{O}_K$ such that
$$\mathcal{O}_K=\mathcal{L}(s_i\pi^j)_{1\le i\le f, \mbox{ } 0\le j\le e-1}.$$
This basis satisfies
\begin{equation*}
\begin{split}
&\left|s_1\right|_p=\left|s_2\right|_p=\cdots=\left|s_f\right|_p\\
>&\left|s_1\pi\right|_p=\left|s_2\pi\right|_p=\cdots=\left|s_f\pi\right|_p\\
&\cdots\\
>&\left|s_1\pi^{e-1}\right|_p=\left|s_2\pi^{e-1}\right|_p=\cdots=\left|s_f\pi^{e-1}\right|_p\\
>&\left|p\cdot s_1\right|_p.
\end{split}
\end{equation*}
If a lattice has an orthogonal basis with the above property, then the LVP algorithm of this lattice can be simplified, though still to be an exponential time algorithm, as shown in Theorem \ref{th-4.5}.\par
Before presenting this theorem, we need a lemma.

\begin{lemma}\label{le-4.4}
Let $V$ be a vector space over $\mathbb{Q}_p$ of finite dimension $n>0$, and let $|\cdot|$ be a norm on $V$. Let $\alpha_{1},\alpha_{2},\dots,\alpha_{n}$ be a basis of $V$ over $\mathbb{Q}_p$ such that $\left|\alpha_{1}\right|=\left|\alpha_{2}\right|=\cdots=\left|\alpha_{n}\right|$, which is denoted by $\lambda_{1}$. Then $\alpha_{1},\alpha_{2},\dots,\alpha_{n}$ is an orthogonal basis of $V$ over $\mathbb{Q}_p$ if and only if
$$\left|\sum^{n}_{i=1}a_{i}\alpha_{i}\right|=\lambda_{1}$$
for all
$$(a_1,a_2,\dots,a_n)\in\{0,1,\dots,p-1\}^n-(0,0,\dots,0).$$
\end{lemma}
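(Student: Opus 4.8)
The plan is to establish the two implications separately: the forward direction is a one-line computation from the definition, while the reverse direction carries all the content, reducing an infinite family of norm identities to the single finite check in the statement.

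For the \emph{only if} direction, I would assume $\alpha_1,\dots,\alpha_n$ is an orthogonal basis and use the defining property directly: $\left|\sum_{i=1}^n a_i\alpha_i\right|=\max_{1\le i\le n}|a_i|_p\,|\alpha_i|=\lambda_1\max_{1\le i\le n}|a_i|_p$. For any tuple $(a_1,\dots,a_n)\in\{0,1,\dots,p-1\}^n-(0,\dots,0)$, some coordinate $a_i$ lies in $\{1,\dots,p-1\}$, so $|a_i|_p=1$ and hence $\max_i|a_i|_p=1$, giving $\left|\sum_i a_i\alpha_i\right|=\lambda_1$ as required.

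For the \emph{if} direction, I would invoke Lemma \ref{le-2.4}: it suffices to verify $\left|\sum_i a_i\alpha_i\right|=\max_i|a_i\alpha_i|$ in the case where one of the $a_i$ equals $1$ and the remaining ones lie in $\mathbb{Z}_p$. In that case every $|a_i|_p\le 1$, with equality at the index where $a_i=1$, so $\max_i|a_i\alpha_i|=\lambda_1$ and the task collapses to proving $\left|\sum_i a_i\alpha_i\right|=\lambda_1$. The key step is to split off the first $p$-adic digit of each coefficient: write $a_i=c_i+pb_i$ with $c_i\in\{0,1,\dots,p-1\}$ and $b_i\in\mathbb{Z}_p$, so that $\sum_i a_i\alpha_i=\sum_i c_i\alpha_i+p\sum_i b_i\alpha_i$. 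Since at least one $a_i$ is a unit in $\mathbb{Z}_p$, its digit $c_i$ is nonzero, whence $(c_1,\dots,c_n)\ne(0,\dots,0)$ and the hypothesis yields $\left|\sum_i c_i\alpha_i\right|=\lambda_1$. Meanwhile $\left|p\sum_i b_i\alpha_i\right|=p^{-1}\left|\sum_i b_i\alpha_i\right|\le p^{-1}\lambda_1<\lambda_1$, using $|b_i|_p\le1$ to bound $\left|\sum_i b_i\alpha_i\right|\le\lambda_1$. As the two summands have distinct norms, the strict form of the ultrametric inequality forces $\left|\sum_i a_i\alpha_i\right|=\lambda_1$.

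The main obstacle — really the only subtle point — is guaranteeing that the reduced tuple $(c_1,\dots,c_n)$ is nonzero, so that the finite hypothesis actually applies; this is exactly where the assumption that one coefficient equals $1$ (equivalently, $\max_i|a_i|_p=1$) is used. Once that is secured, the rest is routine non-Archimedean bookkeeping, and the resulting identity $\left|\sum_i a_i\alpha_i\right|=\max_i|a_i\alpha_i|$ for the coefficient tuples demanded by Lemma \ref{le-2.4} is precisely the orthogonality criterion.
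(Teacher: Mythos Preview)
Your proof is correct and follows essentially the same approach as the paper: both directions match, and for the converse you reduce the coefficients modulo $p$ and apply the hypothesis to the resulting tuple in $\{0,1,\dots,p-1\}^n$, controlling the remainder by the ultrametric inequality. The only cosmetic difference is that the paper argues the converse by contradiction (assuming non-orthogonality and producing a reduced tuple with norm $<\lambda_1$), whereas you give the direct version; the underlying idea is identical.
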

\begin{proof}
Assume $\alpha_{1},\alpha_{2},\dots,\alpha_{n}$ is an orthogonal basis of $V$ over $\mathbb{Q}_p$. Then 
\begin{equation*}
\left|\sum^{n}_{i=1}a_{i}\alpha_{i}\right|=\max_{1\le i\le n}\left| a_{i}\alpha_{i}\right|=\lambda_{1}\cdot\max_{1\le i\le n}\left| a_{i}\right|=\lambda_1
\end{equation*}
for all
$$(a_1,a_2,\dots,a_n)\in\{0,1,\dots,p-1\}^n-(0,0,\dots,0).$$
Conversely, assume that $\left|\sum^{n}_{i=1}a_{i}\alpha_{i}\right|=\lambda_{1}$ for all
$$(a_1,a_2,\dots,a_n)\in\{0,1,\dots,p-1\}^n-(0,0,\dots,0).$$
If $\alpha_{1},\alpha_{2},\dots,\alpha_{n}$ is not an orthogonal basis of $V$ over $\mathbb{Q}_p$, then by Lemma \ref{le-2.4}, there is $a_{i}\in\mathbb{Z}_p$, $1\le i\le n$, and at least one $a_{i}=1$, such that 
\begin{equation*}
\left|\sum^{n}_{i=1}a_{i}\alpha_{i}\right|<\max_{1\le i\le n}\left| a_{i}\alpha_{i}\right|=\lambda_{1}.
\end{equation*}
Let $a_i^{\prime}=a_i \mod p$, $1\le i\le n$, then
$$(a_1^{\prime},a_2^{\prime},\dots,a_n^{\prime})\in\{0,1,\dots,p-1\}^n-(0,0,\dots,0)$$
and
$$\left|\sum^{n}_{i=1}a_{i}^{\prime}\alpha_{i}\right|=\left|\sum^{n}_{i=1}a_{i}\alpha_{i}\right|<\lambda_{1}.$$
This is a contradiction. Hence $\alpha_{1},\alpha_{2},\dots,\alpha_{n}$ is an orthogonal basis of $V$ over $\mathbb{Q}_p$.
\end{proof}

Now we prove our theorem.

\begin{theorem}\label{th-4.5}
Let $\mathcal{L}=\mathcal{L}(\beta_1,\beta_2,\dots,\beta_m)$ be a lattice in $K$. Let $f<m$ be a positive integer. Suppose that $\mathcal{L}$ has an orthogonal basis $\alpha_1,\alpha_2,\dots,\alpha_m$ such that
$$\left|\alpha_1\right|_p=\left|\alpha_2\right|_p=\cdots=\left|\alpha_f\right|_p>\left|\alpha_{f+1}\right|_p\ge\cdots\ge\left|\alpha_m\right|_p>\left|p\alpha_1\right|_p.$$
Then there is a deterministic exponential time algorithm to find the value $\lambda_2$ and a lattice vector $v\in\mathcal{L}$ such that $\left|v\right|_p=\lambda_2$. The algorithm takes $O(mp^f)$ many $p$-adic absolute value computations of elements of $K$.
\end{theorem}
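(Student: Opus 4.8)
The plan is to mimic the reduction carried out in Theorem \ref{th-4.2}, adapted to the fact that there are now $f$ independent directions sharing the top norm $\lambda_1 := |\alpha_1|_p = \cdots = |\alpha_f|_p$ rather than a single one. First I would pin down what $\lambda_2$ must be. Since $\alpha_1,\dots,\alpha_m$ is an orthogonal basis, a lattice vector $v=\sum_i a_i\alpha_i$ has $|v|_p=\lambda_1$ exactly when at least one of $a_1,\dots,a_f$ is a unit, and $|v|_p<\lambda_1$ exactly when $a_1,\dots,a_f\in p\mathbb{Z}_p$. Hence the vectors of norm $<\lambda_1$ form the sublattice
$$M'=p\mathbb{Z}_p\alpha_1+\cdots+p\mathbb{Z}_p\alpha_f+\mathbb{Z}_p\alpha_{f+1}+\cdots+\mathbb{Z}_p\alpha_m,$$
whose largest vector norm is $\max\{|p\alpha_1|_p,|\alpha_{f+1}|_p\}=|\alpha_{f+1}|_p$ by the ultrametric inequality and the hypothesis $|\alpha_{f+1}|_p\ge|\alpha_m|_p>|p\alpha_1|_p$. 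Therefore $\lambda_2=|\alpha_{f+1}|_p$, and the task is to produce this value and a vector attaining it from the input basis $\beta_1,\dots,\beta_m$ alone.

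The computational core is a Gaussian elimination in the residue quotient $\mathcal{L}/M'\cong\mathbb{F}_p^f$, carried out using only the norm oracle. I would first compute $\lambda_1=\max_i|\beta_i|_p$ in $m$ evaluations and separate the top-norm $\beta_i$ from the rest. Processing the top-norm vectors one by one, I maintain a list of pivots $b_1,\dots,b_k$ that are pairwise orthogonal of norm $\lambda_1$; by Lemma \ref{le-4.4} this is equivalent to $|\sum_j d_j b_j|_p=\lambda_1$ for every nonzero $(d_j)\in\{0,\dots,p-1\}^k$, i.e.\ to $\mathbb{F}_p$-independence of their images in $\mathcal{L}/M'$. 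For a new top-norm vector $\beta$ I test the $p^k$ combinations $\beta-\sum_j c_j b_j$ with $c_j\in\{0,\dots,p-1\}$: if one has norm $<\lambda_1$ it lands in $M'$ and I replace $\beta$ by it, and if none does then the image of $\beta$ is independent of the pivots and I append $\beta$ as a new pivot. Because the images of the $\beta_i$ generate $\mathcal{L}/M'\cong\mathbb{F}_p^f$, the procedure terminates with exactly $f$ pivots $\gamma_1,\dots,\gamma_f$ of norm $\lambda_1$ and every remaining basis vector reduced into $M'$, the whole collection $\gamma_1,\dots,\gamma_m$ being again a basis of $\mathcal{L}$; each vector costs at most $p^k\le p^f$ norm evaluations, for a total of $O(mp^f)$.

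It remains to read $\lambda_2$ off the reduced basis, where $\gamma_1,\dots,\gamma_f$ have norm $\lambda_1$ and $\gamma_{f+1},\dots,\gamma_m$ lie in $M'$. Since $\gamma_1,\dots,\gamma_f$ have $\mathbb{F}_p$-independent images while $\gamma_{f+1},\dots,\gamma_m$ map to zero, the argument of the first paragraph gives
$$M'=p\mathbb{Z}_p\gamma_1+\cdots+p\mathbb{Z}_p\gamma_f+\mathbb{Z}_p\gamma_{f+1}+\cdots+\mathbb{Z}_p\gamma_m,$$
so $\lambda_2=\max_{v\in M'}|v|_p=\max\{|p\gamma_1|_p,\max_{f+1\le i\le m}|\gamma_i|_p\}$. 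As $\lambda_2=|\alpha_{f+1}|_p>|p\alpha_1|_p=|p\gamma_1|_p$, the maximum is attained among the lower vectors, yielding $\lambda_2=\max_{f+1\le i\le m}|\gamma_i|_p$ together with an explicit $\gamma_s$ realizing it. The main obstacle I anticipate is the elimination step: one must verify that a drop in norm genuinely detects $\mathbb{F}_p$-linear dependence in $\mathcal{L}/M'$ — which is exactly what Lemma \ref{le-4.4} supplies — and that the brute-force search over the $p^k$ coefficient tuples, unavoidable when only a norm oracle is available, is precisely what forces the exponential factor $p^f$ while keeping the dependence on $m$ linear.
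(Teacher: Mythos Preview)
Your proposal is correct and follows essentially the same approach as the paper: both incrementally extract $f$ ``pivot'' vectors of norm $\lambda_1$ by brute-force testing $\{0,\dots,p-1\}$-combinations (via Lemma~\ref{le-4.4}), reduce the remaining top-norm vectors below $\lambda_1$, and read off $\lambda_2$ as the maximum norm of the reduced tail. Your framing through the quotient $\mathcal{L}/M'\cong\mathbb{F}_p^f$ is a bit more conceptual---the paper instead invokes Theorem~\ref{th-2.10} to cap the number of orthogonal top-norm vectors at $f$ and appeals to a generalization of Lemma~\ref{le-4.1} at the end---but the algorithm and the $O(mp^f)$ count are identical.
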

\begin{proof}
We may assume that for $1\le i\le l$, $\left|\beta_i\right|_p=\lambda_1$ and for $l+1\le i\le m$, $\left|\beta_i\right|_p<\lambda_1$. By Theorem \ref{th-2.10}, we must have $l\ge f$.\par
Suppose that $1\le k<l$ and $\beta_1,\beta_2,\dots,\beta_k$ is an orthogonal basis of $\mathcal{L}(\beta_1,\beta_2,\dots,\beta_k)$. According to Lemma \ref{le-4.4}, $\beta_1,\beta_2,\dots,\beta_k,\beta_{k+1}$ is an orthogonal basis of $\mathcal{L}(\beta_1,\beta_2,\dots,\beta_k,\beta_{k+1})$ if and only if
$$\left|\sum^{k+1}_{i=1}b_{i}\beta_{i}\right|_p=\lambda_{1}$$
for all
$$(b_1,b_2,\dots,b_{k+1})\in\{0,1,\dots,p-1\}^{k+1}-(0,0,\dots,0).$$
Since $\beta_1,\beta_2,\dots,\beta_k$ is an orthogonal basis of $\mathcal{L}(\beta_1,\beta_2,\dots,\beta_k)$, the case $b_{k+1}=0$ is already held. If $b_{k+1}\ne0$, then, by multiplying $\left|b^{-1}_{k+1}\right|_p=1$ to the left side of the equation, we may assume that $b_{k+1}=1$.\par
Therefore, we can check whether $\beta_1,\beta_2,\dots,\beta_k,\beta_{k+1}$ is an orthogonal basis of $\mathcal{L}(\beta_1,\beta_2,\dots,\beta_k,\beta_{k+1})$ by checking whether
$$\left|\beta_{k+1}-\sum_{i=1}^{k}{b_i\beta_i}\right|_p=\lambda_1$$
holds for every $b_1,b_2,\dots,b_k\in\{0,1,\dots,p-1\}$. If this condition holds, then we recursively check $\beta_1,\beta_2,\dots,\beta_{k+1},\beta_{k+2}$. Otherwise, we check $\beta_1,\beta_2,\dots,\beta_{k},\beta_{j}$ for $k+2\le j\le l$ until we find a larger orthogonal basis or all $j$'s are checked.\par
By Theorem \ref{th-2.10}, the number of maximum orthogonal vectors among $\beta_1,\beta_2,\dots,\beta_l$ is $f$, hence the above check can be done within $p^f$ many $p$-adic absolute value computations of elements of $K$.\par
We may assume that $\beta_1,\beta_2,\dots,\beta_f$ are maximum orthogonal vectors among $\beta_1,\beta_2,\dots,\beta_l$. Then for every $f+1\le j\le l$, there is $b_{j1},b_{j2},\dots,b_{jk}\in\{0,1,\dots,p-1\}$ such that $$\left|\beta_{j}-\sum_{i=1}^{k}{b_{ji}\beta_i}\right|_p<\lambda_1.$$
This can also be done within $p^f$ many $p$-adic absolute value computations of elements of $K$. Let
$$\beta_{j}^{\prime}=\beta_{j}-\sum_{i=1}^{k}{b_{ji}\beta_i}$$
for $f+1\le i\le l$ and $\beta_i^{\prime}=\beta_i$ for the remaining $1\le i\le m$. Then
$$\mathcal{L}(\beta_1,\beta_2,\dots,\beta_m)=\mathcal{L}(\beta_1^{\prime},\beta_2^{\prime},\dots,\beta_m^{\prime}),$$
and
$$\left|\beta_1^{\prime}\right|_p=\left|\beta_2^{\prime}\right|_p=\cdots=\left|\beta_f^{\prime}\right|_p=\lambda_1>\max{\left\{\left|\beta_{f+1}^{\prime}\right|_p,\left|\beta_{f+2}^{\prime}\right|_p,\dots,\left|\beta_m^{\prime}\right|_p\right\}}.$$
Let $f+1\le s\le m$ be the subscript such that
$$\left|\beta_s^{\prime}\right|_p=\max{\left\{\left|\beta_{f+1}^{\prime}\right|_p,\left|\beta_{f+2}^{\prime}\right|_p,\dots,\left|\beta_m^{\prime}\right|_p\right\}}.$$
Since $\left|\beta_s^{\prime}\right|_p\ge\left|\alpha_m\right|_p>\left|p\alpha_1\right|_p=\left|p\beta_1\right|_p$, by a generalization of Lemma \ref{le-4.1}, we conclude that $\lambda_2=\left|\beta_s^{\prime}\right|_p$ and a second longest vector is $v=\beta_s^{\prime}$.
\end{proof}

We summarize the algorithm in above theorem as Algorithm \ref{al-3}.


\begin{algorithm}[H]
\caption{finding $\lambda_2$}\label{al-3}
\begin{algorithmic}[1]
\Require a lattice $\mathcal{L}=\mathcal{L}(\beta_1,\beta_2,\dots,\beta_m)$
\Ensure the value $\lambda_2$ and a lattice vector $v\in\mathcal{L}$ such that $\left|v\right|_p=\lambda_2$ 
\State rearrange $\beta_1,\beta_2,\dots,\beta_m$ such that  $\left|\beta_1\right|_p=\max{\left\{\left|\beta_1\right|_p,\left|\beta_2\right|_p,\dots,\left|\beta_m\right|_p\right\}}$
\State $\beta_1^{\prime}\Leftarrow\beta_1$, $L\Leftarrow[\beta_1]$ (a list, indices begin with 1)
\For{$i=2$ to $m$}
	\State $t\Leftarrow {\rm length}(L)$
 	\For{$(j_1,j_2,\dots,j_t)\in\{0,1,\dots,p-1\}^t$}
		\If{$\left|\beta_i-\sum_{k=1}^{t}{j_kL[k]}\right|_p<\left|\beta_1\right|_p$}
			\State $\beta_i^{\prime}\Leftarrow\beta_i-\sum_{k=1}^{t}{j_kL[k]}$
			\State back to step 3 with the next $i$
		\EndIf
	\State $\beta_i^{\prime}\Leftarrow\beta_i$
	\State append $\beta_i$ to $L$
	\EndFor
\EndFor
\State let $s$ be the subscript such that
$\left|\beta_s^{\prime}\right|_p=\max{\left\{\left|\beta_i^{\prime}\right|_p:\beta_i\not\in L\right\}}$
\State\Return $\lambda_2\Leftarrow\left|\beta_s^{\prime}\right|_p$, $v\Leftarrow\beta_s^{\prime}$
\end{algorithmic}
\end{algorithm}

This algorithm takes at most $(m-1)p^f$ many $p$-adic absolute value computations of elements of $K$. Compared with the LVP algorithm given in \cite{ref-1',ref-1}, which takes $O(p^m)$ many $p$-adic absolute value computations, our algorithm is much faster when $f$ is much less than $m$.

\section{Break the Schemes}

\subsection{The Attack}\label{se-4.1}

In \cite{ref-2}, Deng et al. discussed four kinds of possible attacks on their scheme: recovering a uniformizer, finding an orthogonal basis, solving CVP-instances and modulo $p$ attack. They conjectured that these attacks are hard to implement. However, as we have shown in the previous section, it is easy to find an orthogonal basis of some special lattices in a totally ramified field when $p$ is not very large.\par
Our attack is exactly based on recovering a uniformizer and finding an orthogonal basis. To break the schemes, we first recover a uniformizer from the public key and then use it to generate an orthogonal basis, as presented in the following lemma and theorem. 

\begin{lemma}\label{le-4.6}
Let $F(x)$ be the polynomial in the public key of the signature scheme or the encryption cryptosystem. Let $\zeta$ be a root of $F(x)$ and $\mathcal{L}=\mathcal{O}_K=\mathbb{Z}_p[\zeta]$. Then there is a deterministic algorithm to find the value $\lambda_2$ and the lattice vector $v\in\mathcal{L}$ such that $\left|v\right|_p=\lambda_2$. The algorithm takes $O(pm)$ many $p$-adic absolute value computations of elements of $K$.
\end{lemma}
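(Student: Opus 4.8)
The plan is to recognize Lemma~\ref{le-4.6} as a direct specialization of Theorem~\ref{th-4.2} to the full-rank lattice $\mathcal{L}=\mathcal{O}_K$ of rank $n$, presented via the power basis $1,\zeta,\zeta^2,\dots,\zeta^{n-1}$ that is computable from the public polynomial $F(x)$. The only thing that genuinely needs to be checked is that $\mathcal{O}_K$ satisfies the \emph{existence} hypothesis of Theorem~\ref{th-4.2}: that it admits some basis whose absolute values are strictly decreasing and bottom out above $\left|p\cdot1\right|_p$.

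First I would invoke the structure fixed in the key generation. The field $K$ is totally ramified of degree $n$ over $\mathbb{Q}_p$, built from an Eisenstein polynomial $f$ with root $\theta$, and $\mathcal{O}_K=\mathbb{Z}_p[\theta]$ with $\theta$ a uniformizer, so $\left|\theta\right|_p=p^{-1/n}$. Hence $1,\theta,\theta^2,\dots,\theta^{n-1}$ is a $\mathbb{Z}_p$-basis of $\mathcal{O}_K$ satisfying
$$\left|1\right|_p>\left|\theta\right|_p>\left|\theta^2\right|_p>\cdots>\left|\theta^{n-1}\right|_p>\left|p\cdot1\right|_p,$$
since $\left|\theta^j\right|_p=p^{-j/n}$ and $p^{-(n-1)/n}>p^{-1}$. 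This is exactly the hypothesis of Theorem~\ref{th-4.2} with $\alpha_i=\theta^{i-1}$ (and $m=n$).

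Next, because $\zeta$ is chosen so that $\mathbb{Z}_p[\zeta]=\mathbb{Z}_p[\theta]=\mathcal{O}_K$, the public power basis $\beta_i=\zeta^{i-1}$ $(1\le i\le n)$ generates the same lattice $\mathcal{L}$. I would therefore feed this basis into the algorithm of Theorem~\ref{th-4.2} (Algorithm~\ref{al-1}). By that theorem the algorithm returns the value $\lambda_2$ together with a lattice vector $v$ with $\left|v\right|_p=\lambda_2$, using $O(pn)$ many $p$-adic absolute value computations, which is the claimed bound. Concretely one finds $\lambda_1=\left|1\right|_p=1$ and $\lambda_2=p^{-1/n}$, so any such $v$ satisfies $\left|v\right|_p=\left|\theta\right|_p$ and is thus a uniformizer of $K$ — the payoff that the subsequent attack exploits.

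The point worth emphasizing, rather than a real obstacle, is that the \emph{given} basis $1,\zeta,\dots,\zeta^{n-1}$ need not itself have the strictly decreasing norm property (for instance $\zeta$ may be a unit, so $\left|\zeta\right|_p=\left|1\right|_p=1$). What makes the reduction legitimate is that Theorem~\ref{th-4.2} only demands the existence of some witness basis with that property, while Algorithm~\ref{al-1} runs correctly on any generating basis of the same lattice; the uniformizer power basis $1,\theta,\dots,\theta^{n-1}$ supplies the required witness even though it belongs to the secret data and is never used explicitly by the algorithm.
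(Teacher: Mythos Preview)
Your proof is correct and follows essentially the same approach as the paper: verify that $\mathcal{O}_K$ admits a uniformizer-power basis satisfying the strict-norm hypothesis of Theorem~\ref{th-4.2}, then apply that theorem to the public basis $1,\zeta,\dots,\zeta^{n-1}$. The only cosmetic difference is that the paper takes the private key $\alpha_i=\theta^{j_i}$ as the witness basis, whereas you use the cleaner choice $\theta^{i-1}$; your choice in fact sidesteps the need to argue that the largest exponent $j_i$ stays below $n$.
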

\begin{proof}
Let $(\alpha_1,\alpha_2,\dots,\alpha_n)$ be the private key in the signature scheme or the encryption cryptosystem. Since $\alpha_i=\theta^{j_i}$ where $\theta$ is a uniformizer and $j_i \pmod n$ are distinct, we may assume that
$$\left|\alpha_1\right|_p>\left|\alpha_2\right|_p>\cdots>\left|\alpha_n\right|_p.$$
Also, we have
$$\left|\alpha_n\right|_p=\left|\theta^{j_n}\right|_p=p^{-\frac{j_n}{n}}>p^{-1}>\left|p\alpha_1\right|_p.$$
Since
$$\mathcal{L}=\mathcal{L}(\alpha_1,\alpha_2,\dots,\alpha_n)=\mathcal{L}(1,\zeta,\zeta^2,\dots,\zeta^{n-1}),$$
by Theorem \ref{th-4.2}, there is a deterministic algorithm to find the value $\lambda_2$ and the lattice vector $v\in\mathcal{L}$ such that $\left|v\right|_p=\lambda_2$. The algorithm takes $O(pm)$ many $p$-adic absolute value computations of elements of $K$.
\end{proof}

Uniformizers are just the second longest vector in the $p$-adic lattice $\mathcal{O}_K$. Once we find out a uniformizer of $K$, we can use this uniformizer to generate an orthogonal basis of $\mathcal{O}_K$ easily.

\begin{theorem}\label{th-4.7}
Let $F(x)$ be the polynomial in the public key of the signature scheme or the encryption cryptosystem. Let $\zeta$ be a root of $F(x)$ and $\mathcal{L}=\mathcal{O}_K=\mathbb{Z}_p[\zeta]$. Then there is a deterministic algorithm to find an orthogonal basis of $\mathcal{L}$. The algorithm takes $O(pm)$ many $p$-adic absolute value computations of elements of $K$.
\end{theorem}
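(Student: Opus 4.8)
The plan is to bootstrap directly from Lemma \ref{le-4.6}, which already does almost all of the work. First I would run the algorithm of that lemma on the lattice $\mathcal{L}=\mathcal{O}_K=\mathbb{Z}_p[\zeta]$ to obtain, in $O(pm)$ many $p$-adic absolute value computations, a vector $v\in\mathcal{L}$ with $\left|v\right|_p=\lambda_2$. The key observation is that this $v$ is automatically a uniformizer. Since $K/\mathbb{Q}_p$ is totally ramified of degree $n$, the norm values attained by nonzero elements of $\mathcal{O}_K$ are exactly $1>p^{-1/n}>p^{-2/n}>\cdots$; thus $\lambda_1=1$ (realized by $1\in\mathcal{O}_K$) and $\lambda_2=p^{-1/n}=\left|\pi\right|_p$. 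Any element of norm $p^{-1/n}$ differs from a fixed uniformizer by a unit, hence has valuation exactly $1/n$ and is itself a uniformizer.

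Second, once such a $v$ is in hand, I would simply form its powers $1,v,v^2,\dots,v^{n-1}$ by repeated multiplication in $\mathbb{Z}_p[x]/(F(x))$. Two facts then finish the argument. Because $K$ is totally ramified, every uniformizer satisfies $\mathcal{O}_K=\mathbb{Z}_p[v]=\mathcal{L}(1,v,\dots,v^{n-1})$, so these powers span the correct lattice $\mathcal{L}$; and since $\left|v^k\right|_p=p^{-k/n}$ are pairwise distinct with $1>\left|v\right|_p>\cdots>\left|v^{n-1}\right|_p>\left|p\cdot 1\right|_p$, the norm sets $\{\left|w\right|:w\in\mathbb{Q}_p\cdot v^i\}$ meet only in $\{0\}$, so by Proposition \ref{pr-2.3} (equivalently Proposition 3.4 in \cite{ref-2}, invoked at the start of Section \ref{se-3}) the family $1,v,\dots,v^{n-1}$ is an orthogonal basis of $\mathcal{O}_K=\mathcal{L}$. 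Crucially, this entire second phase is purely algebraic and requires no $p$-adic absolute value computations at all.

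For the complexity bound the two phases contribute separately: uniformizer recovery costs $O(pm)$ absolute value computations by Lemma \ref{le-4.6}, and forming the powers costs none, so the total is $O(pm)$, as claimed. The only point demanding care — and the real crux — is the identification in the first paragraph: one must be sure the returned $v$ is a genuine generator of $\mathcal{O}_K$ and not merely some vector of the right norm failing to generate. This is resolved entirely by the totally ramified hypothesis, under which norm $p^{-1/n}$ forces valuation $1/n$, whence $v$ satisfies an Eisenstein polynomial and $\mathbb{Z}_p[v]=\mathcal{O}_K$; no separate verification of the generation property is needed, which is what keeps the whole construction within the $O(pm)$ budget.
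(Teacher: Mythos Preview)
Your proposal is correct and follows essentially the same approach as the paper: invoke Lemma \ref{le-4.6} to obtain a second-longest vector $v$, then take $1,v,\dots,v^{n-1}$ and appeal to Proposition \ref{pr-2.3}. If anything, you supply more justification than the paper does (it does not spell out why $v$ is a uniformizer or why $\mathbb{Z}_p[v]=\mathcal{O}_K$), but the argument is the same.
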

\begin{proof}
By Lemma \ref{le-4.6}, there is a deterministic algorithm to find the value $\lambda_2$ and the lattice vector $v\in\mathcal{L}$ such that $\left|v\right|_p=\lambda_2$. The algorithm takes $O(pm)$ many $p$-adic absolute value computations of elements of $K$. By Proposition \ref{pr-2.3}, $1,v,v^2,\dots,v^{n-1}$ is an orthogonal basis of $\mathcal{L}$.
\end{proof}

Once we obtain an orthogonal basis of $\mathcal{O}_K$, though it is not identical to the private key in general, we can use this orthogonal basis to solve the LVP and the CVP in the lattice $\mathcal{O}_K$ efficiently. This enables us to forge a valid signature of any message and decrypt any ciphertext. Hence the public-key encryption cryptosystem and signature scheme are broken completely.\par
On the other hand, if we want to use the original LVP algorithm in \cite{ref-1',ref-1} to recover a uniformizer, then it will take $O(p^m)$ many $p$-adic absolute value computations of elements of $K$. Therefore, this attack extremely reduces the security strength of the schemes.

\subsection{A More Effective Way}\label{se-4.2}

When the degree of the polynomial in the public key is prime to $p$, the time consumption of finding a uniformizer can be reduced to $O(1)$. We have the following theorem.

\begin{theorem}\label{th-4.8}
Let
$$F(x)=x^n+F_{n-1}x^{n-1}+\cdots+F_1x+F_0$$
be the polynomial in the public key of the signature scheme or the encryption cryptosystem. Let $\zeta$ be a root of $F(x)$ and $K=\mathbb{Q}_p(\zeta)$. If $\gcd(n,p)=1$, then $\zeta+(F_{n-1}n^{-1}\mod p)$ is a uniformizer.
\end{theorem}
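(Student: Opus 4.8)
The plan is to pin down the residue of $\zeta$ modulo $\pi$, which lives in a residue field of only $p$ elements because $K/\mathbb{Q}_p$ is totally ramified, and then to show that correcting $\zeta$ by (the negative of) this residue yields an element of valuation exactly one. The first step is structural. Since $\mathcal{O}_K=\mathbb{Z}_p[\zeta]\cong\mathbb{Z}_p[x]/(F(x))$, reduction modulo $p$ gives $\mathcal{O}_K/p\mathcal{O}_K\cong\mathbb{F}_p[x]/(\overline{F}(x))$, where $\overline{F}$ denotes the reduction of $F$. As $K/\mathbb{Q}_p$ is totally ramified we have $p\mathcal{O}_K=\pi^n\mathcal{O}_K$, so $\mathcal{O}_K/p\mathcal{O}_K$ is a local ring whose residue field is $\mathcal{O}_K/\pi\mathcal{O}_K=\mathbb{F}_p$ (residue degree $f=1$). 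But $\mathbb{F}_p[x]/(\overline{F})$ is local with residue field $\mathbb{F}_p$ only if $\overline{F}$ is a power of a \emph{linear} irreducible factor; comparing degrees forces $\overline{F}(x)=(x-c)^n$ for a unique $c\in\{0,1,\dots,p-1\}$, and this $c$ is exactly the residue of $\zeta$ modulo $\pi$.

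Next I would compare coefficients. Expanding $(x-c)^n$, the coefficient of $x^{n-1}$ equals $-nc$, so $F_{n-1}\equiv -nc\pmod p$. This is precisely where the hypothesis $\gcd(n,p)=1$ is used: it allows me to invert $n$ modulo $p$ and solve $c\equiv -F_{n-1}n^{-1}\pmod p$. Writing $c_0=(F_{n-1}n^{-1}\bmod p)\in\{0,\dots,p-1\}$ for the quantity appearing in the statement, we obtain $c_0\equiv -c\pmod p$, hence $\zeta+c_0\equiv \zeta-c\equiv 0$ in $\mathcal{O}_K/\pi\mathcal{O}_K$. Therefore $\eta:=\zeta+c_0\in\pi\mathcal{O}_K$, i.e. $|\eta|_p\le|\pi|_p=p^{-1/n}$, and all that remains is to exclude the possibility $|\eta|_p<|\pi|_p$.

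The main obstacle is this final step: upgrading ``$\eta$ lies in the maximal ideal'' to ``$\eta$ has valuation exactly one,'' since a uniformizer must lie in $\pi\mathcal{O}_K\setminus\pi^2\mathcal{O}_K$. The membership $\eta\in\pi\mathcal{O}_K$ is automatic by construction, so the content is entirely in ruling out $\eta\in\pi^2\mathcal{O}_K$. Here I would exploit that $c_0$ is an ordinary integer, so $\mathbb{Z}_p[\eta]=\mathbb{Z}_p[\zeta]=\mathcal{O}_K$. If $\eta\in\pi^2\mathcal{O}_K$, then every power $\eta^k$ with $k\ge 1$ also lies in $\pi^2\mathcal{O}_K$, whence $\mathbb{Z}_p[\eta]\subseteq \mathbb{Z}_p+\pi^2\mathcal{O}_K$. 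For $n\ge 2$ one checks $\mathbb{Z}_p\cap\pi^2\mathcal{O}_K=p\mathbb{Z}_p$, so the image of $\mathbb{Z}_p+\pi^2\mathcal{O}_K$ in $\mathcal{O}_K/\pi^2\mathcal{O}_K$ is just the $p$-element subring $\mathbb{F}_p$, strictly smaller than the $p^2$-element ring $\mathcal{O}_K/\pi^2\mathcal{O}_K$; in particular $\pi\notin\mathbb{Z}_p+\pi^2\mathcal{O}_K$. This contradicts $\mathbb{Z}_p[\eta]=\mathcal{O}_K$. Hence $\eta\notin\pi^2\mathcal{O}_K$, so $|\eta|_p=|\pi|_p=p^{-1/n}$ and $\zeta+(F_{n-1}n^{-1}\bmod p)$ is a uniformizer.

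Equivalently, one may phrase the last two steps as showing that the shifted minimal polynomial $F(x-c_0)$ is Eisenstein: the reduction argument gives $\overline{F(x-c_0)}=x^n$ (all lower coefficients divisible by $p$), and the generation argument above is exactly what pins the constant term $F(-c_0)$ to valuation one, which is the remaining Eisenstein condition and the crux of the proof.
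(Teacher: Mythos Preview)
Your proof is correct and takes a genuinely different route from the paper's. The paper works with a fixed uniformizer $\theta$, writes $\zeta=a_0+a_1\theta+\cdots+a_{n-1}\theta^{n-1}$, and computes the trace $-F_{n-1}=\sum_i\sigma_i(\zeta)$ via Galois conjugates; it then uses Newton's identities together with the Eisenstein property of the minimal polynomial of $\theta$ to show that all power sums $\sum_i\sigma_i(\theta)^j$ ($1\le j\le n-1$) vanish modulo $p$, yielding $a_0\equiv -F_{n-1}n^{-1}\pmod p$. The final step invokes Lemma~\ref{le-A.1} (equivalently, $a_1\not\equiv 0\pmod p$) to conclude that $\zeta-a_0$, and hence $\zeta+(F_{n-1}n^{-1}\bmod p)$, has valuation exactly $1/n$. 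Your argument is more structural: you read off $\overline{F}(x)=(x-c)^n$ directly from the fact that $\mathcal{O}_K/p\mathcal{O}_K\cong\mathbb{F}_p[x]/(\overline{F})$ is local with residue field $\mathbb{F}_p$, which immediately gives the residue of $\zeta$ and the congruence for $F_{n-1}$ without any Galois theory or symmetric-function manipulation. For the ``valuation exactly one'' step you replace the appeal to Lemma~\ref{le-A.1} by the self-contained observation that $\eta\in\pi^2\mathcal{O}_K$ would force $\mathbb{Z}_p[\eta]\subseteq\mathbb{Z}_p+\pi^2\mathcal{O}_K\subsetneq\mathcal{O}_K$. Both approaches hinge on the key-generation hypothesis $\mathbb{Z}_p[\zeta]=\mathcal{O}_K$; your version is more elementary and avoids passing to the splitting field, while the paper's trace computation has the minor advantage of making the role of the coefficient $a_1$ (and hence the link to Lemma~\ref{le-A.1}) explicit.
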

\begin{proof}
Let $\zeta=\zeta_1,\zeta_2,\dots,\zeta_n$ be $n$ roots of $F(x)$. Let $\theta$ be a uniformizer. Write
$$\zeta=a_{0}+a_{1}\theta+a_{2}\theta^2+\cdots+a_{n-1}\theta^{n-1}$$
where $a_{i}\in\mathbb{Z}_p$ $(0\le i\le n-1)$. Let $L$ be the splitting field of $F$ over $\mathbb{Q}_p$. Then for each $\zeta_i$ ($1\le i\le n$), there is a Galois automorphism $\sigma_i\in {\rm Gal}(L/\mathbb{Q}_p)$ which maps $\zeta$ to $\zeta_i$, i.e.,
$$\zeta_i=\sigma_i(\zeta)=a_{0}+a_{1}\sigma_i(\theta)+a_{2}\sigma_i(\theta)^2+\cdots+a_{n-1}\sigma_i(\theta)^{n-1}.$$
We have
$$-F_{n-1}=\sum^{n}_{i=1}{\zeta_i}=na_0+a_1\sum^{n}_{i=1}{\sigma_i(\theta)}+\cdots+a_{n-1}\sum^{n}_{i=1}{\sigma_i(\theta)^{n-1}}.$$

Let
$$f(x)=x^n+f_{n-1}x^{n-1}+\cdots+f_{1}x+f_0$$
be the minimal polynomial of $\theta$. Since $\sigma_i\ne\sigma_j$ for $i\ne j$, $\theta=\sigma_1(\theta),\sigma_2(\theta),\dots,\sigma_n(\theta)$ compose $n$ roots of $f(x)$. By fundamental theorem of symmetric polynomials, symmetric polynomials $\sum^{n}_{i=1}{\sigma_i(\theta)^{j}}$ ($1\le j\le n-1$)  can be given by a polynomial expression in terms of elementary symmetric polynomials, i.e., in terms of $f_0,f_1,\dots,f_{n-1}$. Since $\theta$ is a uniformizer, $f(x)$ is an Eisenstein polynomial and hence $f_i\equiv0\pmod p$ for $0\le i\le n-1$. Therefore,
$$a_{0}\equiv -F_{n-1}n^{-1}\pmod p.$$
Finally, since $a_{1}\not\equiv0\pmod p$, we conclude that $\zeta+(F_{n-1}n^{-1}\mod p)$ is a uniformizer.
\end{proof}

\subsection{A Toy Example}\label{se-4.3}

In order to illustrate our attack on the $p$-adic public-key encryption cryptosystem, we conduct an attack on the simple example given in \cite{ref-2}, Section 7.5.\par
Suppose that Oscar obtains the public key $(F(x),\delta=\frac{1}{5},(\beta_1,\beta_2,\beta_3,\beta_4))$, where
\begin{equation*}
\begin{split}
F(x)&=x^{20}-20x^{19}+190x^{18}-1120x^{17}+4555x^{16}-13470x^{15}+29670x^{14}\\
&-48500x^{13}+54972x^{12}-26650x^{11}-57366x^{10}+202684x^9-378052x^8\\
&+504970x^7-502444x^6+370306x^5-200173x^4+79034x^3-21942x^2\\
&+3548x-167
\end{split}
\end{equation*}
is the minimal polynomial of $\zeta$ and $\beta_1,\beta_2,\beta_3,\beta_4$ are expressed as polynomials in $\zeta$ of degree less than $n$ with coefficients in $\mathbb{Z}_2$. He runs Algorithm \ref{al-1} with the input $F(x)$. Since $\left|\zeta^i\right|_2=1$ for $i=0,1,\dots,19$ and 
\begin{equation*}
\left|\zeta^i-1\right|_2=\left\{
\begin{array}{lcl}
2^{-\frac{1}{20}} & \mbox{for} & i=1,3,5,7,9,11,13,15,17,19\\
2^{-\frac{1}{10}} & \mbox{for} & i=2,6,10,14,18\\
2^{-\frac{1}{5}} & \mbox{for} & i=4,12\\
2^{-\frac{2}{5}} & \mbox{for} & i=8\\
2^{-\frac{4}{5}} & \mbox{for} & i=16\\
\end{array}\right.,
\end{equation*}
the algorithm outputs $\lambda_2=2^{-\frac{1}{20}}$ and a vector $\zeta-1$, which is a uniformizer of $\mathbb{Q}_2(\zeta)$. Denote $\gamma=\zeta-1$, then $1,\gamma,\gamma^2,\dots,\gamma^{19}$ is an orthogonal basis of $\mathbb{Z}_2[\zeta]$.\par
The ciphertext is $C=\beta_1+\beta_2+\beta_4+r$, where
\begin{equation*}
\begin{split}
r&=\zeta^{19}+\zeta^{18}+3\zeta^{16}+2\zeta^{15}+3\zeta^{14}+\zeta^{13}+3\zeta^{12}\\
&+2\zeta^{10}+3\zeta^9+2\zeta^7+2\zeta^6+3\zeta^4+3\zeta^3+1.
\end{split}
\end{equation*}
$C$ is also expressed as polynomials in $\zeta$ and Oscar does not know the above summands. To recover the plaintext from $C$, Oscar first expresses $\beta_1,\beta_2,\beta_3,\beta_4$ and $C$ as polynomials in $\gamma$. The representations of $\beta_1,\beta_2,\beta_3,\beta_4$ and $C$ are displayed in Appendix \ref{se-B}.

Next, Oscar runs the algorithm ``CVP with orthogonal basis'' in \cite{ref-3.5} with the orthogonal basis $1,\gamma,\gamma^2,\dots,\gamma^{19}$, the lattice $\mathcal{L}(\beta_1,\beta_2,\beta_3,\beta_4)$ and the target vector $C$. The algorithm outputs a lattice vector $v=-\beta_1+\beta_2+\beta_4$ which is
closest to $C$. Therefore, the plaintext is
$$(-1,1,0,1)\equiv(1,1,0,1)\pmod {2}.$$
Oscar successfully recovers the plaintext from the ciphertext.\par
Of course, our attack can be used to break $p$-adic lattice encryption cryptosystems and signature schemes whose Eisenstein polynomial has larger degrees. However, the coefficients of the ciphertext and the basis in the public key are too large to display. Therefore, rather than presenting the details, we list the time consumption of finding a uniformizer in Table \ref{tab-1} instead.\par

\begin{table}[htbp]
  \centering  
    \begin{tabular}{|c|c|c|c|}
    \hline
    \diagbox{$n$}{$p$} & 5     & 7     & 11 \bigstrut\\
    \hline
    200   & 125   & 141   & 391 \bigstrut\\
    \hline
    300   & 564   & 658   & 1110 \bigstrut\\
    \hline
    500   & 2172  & 3234  & 5219 \bigstrut\\
    \hline
    1000  & 22110 & 28220 & 53703 \bigstrut\\
    \hline
    1500  & 72923 & 119344 & 212610 \bigstrut\\
    \hline
    \end{tabular}%
   \caption{Time consumption(ms) of finding a uniformizer for different $n$'s and $p$'s}
  \label{tab-1}%
\end{table}%

The Eisenstein polynomial $f(x)$ is of form $x^n+\sum_{i=0}^{n-1}px^i$. When $n<500$, $\zeta$ is chosen randomly. When $n\ge500$, it is too slow to compute the minimal polynomial for a random $\zeta$, hence we fix $\zeta$ to be $-1+x-x^3+x^{100}$.  This experiment is done on a personal
computer with Windows 11 operation system, i7-12700 CPU and 16-GB memory.

\section{Conclusion}\label{se-6}
Our attack extremely reduces the security strength of the $p$-adic lattice public-key encryption cryptosystems and signature schemes. It succeeds because the extension field used in the signature scheme and the public-key encryption cryptosystem is totally ramified. In order to avoid this attack, we suggest to add residue degree to the extension field. As shown in Section \ref{se-4.1.2}, our attack is not efficient when the extension field has a large residue degree. According to \cite{ref-1}, the LVP is also easy for an unramified extension, since we can choose $p\alpha_i$ as a second longest vector if the basis vector $\alpha_i$ is of length $\lambda_1$. Therefore, we recommend that the extension field should have large  ramification index and residue degree which are almost the same.\par
In a totally ramified extension field $K$, the uniformizer $\pi$ generates an orthogonal basis of $\mathcal{O}_K$. However, in a general extension field $K$, we can not find an orthogonal basis of $\mathcal{O}_K$ as easily as in a totally ramified extension field. Therefore, if we want to design such a scheme, constructing an orthogonal basis of $\mathcal{O}_K$ will be the main difficulty. It is worth for further study and there is much work to do.

\backmatter

\bmhead{Acknowledgements}

This work was supported by National Natural Science Foundation of China(No. 12271517) and National Key R\&D Program of China(No. 2020YFA0712300).

\section*{Statements and Declarations}

The author has no relevant financial or non-financial interests to disclose.

\begin{appendices}

\section{Proof for the efficiency}\label{se-A}

\begin{lemma}\label{le-A.1}
Write
$$\zeta=a_0+a_1\theta+a_2\theta^2+\cdots+a_{n-1}\theta^{n-1}$$
where $a_i\in\mathbb{Z}_p$ $(0\le i\le n-1)$. Then $\mathbb{Z}_p[\theta]=\mathbb{Z}_p[\zeta]$ if and only if $a_1\not\equiv0\pmod p$.
\end{lemma}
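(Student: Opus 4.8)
The plan is to argue entirely through the $p$-adic absolute value and the orthogonality of the power basis $1,\theta,\dots,\theta^{n-1}$, rather than through an abstract module computation. Throughout I would keep in mind that $\mathbb{Z}_p[\zeta]\subseteq\mathcal{O}_K=\mathbb{Z}_p[\theta]$ always holds, so the only content of the lemma is to decide when equality occurs. Since $a_0\in\mathbb{Z}_p$, the first move is to replace $\zeta$ by $w:=\zeta-a_0=a_1\theta+a_2\theta^2+\cdots+a_{n-1}\theta^{n-1}$, noting $\mathbb{Z}_p[\zeta]=\mathbb{Z}_p[w]$. Because $\theta$ is a uniformizer of the totally ramified field $K$, we have $\left|\theta^j\right|_p=p^{-j/n}$, and for distinct $j$ these values lie in distinct cosets of $p^{\mathbb{Z}}$ (as $(j-j')/n\notin\mathbb{Z}$ for $1\le j\ne j'\le n-1$). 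Hence the nonzero terms $a_j\theta^j$ have pairwise distinct absolute values, and the orthogonality of $1,\theta,\dots,\theta^{n-1}$ (Proposition \ref{pr-2.3}) gives $\left|w\right|_p=\max_{1\le j\le n-1}\left|a_j\right|_p\,p^{-j/n}$.

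For the direction $a_1\not\equiv0\pmod p$, I would note that $\left|a_1\right|_p=1$, so $\left|a_1\theta\right|_p=p^{-1/n}$, whereas $\left|a_j\theta^j\right|_p\le p^{-j/n}\le p^{-2/n}$ for every $j\ge2$. Thus the $j=1$ term strictly dominates and $\left|w\right|_p=p^{-1/n}$, so $w$ is itself a uniformizer of $K$. Since $K/\mathbb{Q}_p$ is totally ramified, any uniformizer generates the ring of integers over $\mathbb{Z}_p$ (the fact $\mathcal{O}_K=\mathbb{Z}_p[\pi]$ recalled at the beginning of Section \ref{se-3}), whence $\mathbb{Z}_p[\zeta]=\mathbb{Z}_p[w]=\mathcal{O}_K=\mathbb{Z}_p[\theta]$.

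For the converse I would prove the contrapositive: assuming $a_1\equiv0\pmod p$, I exhibit an element of $\mathcal{O}_K$ missing from $\mathbb{Z}_p[\zeta]$, namely $\theta$ itself. Here $\left|a_1\right|_p\le p^{-1}$ forces $\left|a_1\theta\right|_p\le p^{-(n+1)/n}\le p^{-2/n}$, and together with $\left|a_j\theta^j\right|_p\le p^{-2/n}$ for $j\ge2$ this yields $\left|w\right|_p\le p^{-2/n}$, i.e. $w\in\theta^2\mathcal{O}_K$. As $\theta^2\mathcal{O}_K$ is an ideal, $w^k\in\theta^2\mathcal{O}_K$ for all $k\ge1$, so every element of $\mathbb{Z}_p[w]$ lies in $\mathbb{Z}_p+\theta^2\mathcal{O}_K$. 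It then suffices to check $\theta\notin\mathbb{Z}_p+\theta^2\mathcal{O}_K$: for any $c\in\mathbb{Z}_p$ either $\left|c\right|_p=1$, giving $\left|\theta-c\right|_p=1$, or $\left|c\right|_p\le p^{-1}<\left|\theta\right|_p$, giving $\left|\theta-c\right|_p=p^{-1/n}$; in both cases $\left|\theta-c\right|_p\ge p^{-1/n}>p^{-2/n}$, so $\theta-c\notin\theta^2\mathcal{O}_K$. Hence $\theta\notin\mathbb{Z}_p[\zeta]$ and the containment is strict.

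The step I expect to carry the weight is the orthogonality reduction $\left|w\right|_p=\max_j\left|a_j\theta^j\right|_p$, since the whole dichotomy hinges on reading off $\left|w\right|_p$ from the single leading coefficient $a_1$; this is where the incommensurability of the exponents $j/n$ and Proposition \ref{pr-2.3} do the real work. In the converse, the only genuinely structural point is the inclusion $\mathbb{Z}_p[w]\subseteq\mathbb{Z}_p+\theta^2\mathcal{O}_K$ combined with the observation that $\theta$ escapes that coset; the remaining estimates are routine.
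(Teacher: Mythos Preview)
Your argument is correct. The forward direction is essentially the paper's: both observe that $w=\zeta-a_0$ is a uniformizer when $a_1\not\equiv0\pmod p$ and then invoke $\mathcal{O}_K=\mathbb{Z}_p[\pi]$ for any uniformizer $\pi$ of a totally ramified extension.

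The converse is where you take a genuinely different route. The paper assumes $\theta=\sum_j b_j\zeta^j$ with $b_j\in\mathbb{Z}_p$, substitutes $\zeta=a_0+a_1\theta+\cdots$, and tracks the coefficient of $\theta$ modulo $p$: dropping $a_1\theta$ leaves only monomials $\theta^{\ge2}$ in each $\zeta^j$, and when products produce $\theta^k$ with $k\ge n$ the Eisenstein relation for $\theta$ reduces them with coefficients in $p\mathbb{Z}_p$, so no $\theta^1$-term survives mod $p$, contradicting the left-hand side. Your approach is valuation-theoretic: from $a_1\equiv0\pmod p$ you read off $\lvert w\rvert_p\le p^{-2/n}$, hence $\mathbb{Z}_p[w]\subseteq\mathbb{Z}_p+\theta^2\mathcal{O}_K$, and then exclude $\theta$ from that set by a two-case norm computation. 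Your version is cleaner and sidesteps the bookkeeping of how $\theta^k$ for $k\ge n$ reduces; the paper's version stays entirely inside the $\mathbb{Z}_p$-basis $1,\theta,\dots,\theta^{n-1}$ and never appeals to the ideal $\theta^2\mathcal{O}_K$. Both reach the same obstruction---$\theta$ cannot be hit---but yours phrases it through the filtration by powers of the maximal ideal, while the paper phrases it through a mod-$p$ coefficient.
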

\begin{proof}
Suppose that $a_1\not\equiv0\pmod p$, then $\zeta-a_0$ is also a uniformizer. Hence
$$\mathbb{Z}_p[\zeta]=\mathbb{Z}_p[\zeta-a_0]=\mathbb{Z}_p[\theta].$$\par
On the other hand, suppose that $\mathbb{Z}_p[\theta]=\mathbb{Z}_p[\zeta]$, we can write
\begin{equation*}
\begin{split}
\theta&=b_0+b_1\zeta+b_2\zeta^2+\cdots+b_{n-1}\zeta^{n-1}\\
&=b_0+b_1(a_0+a_1\theta+a_2\theta^2+\cdots+a_{n-1}\theta^{n-1})+\cdots\\
&+b_{n-1}(a_0+a_1\theta+a_2\theta^2+\cdots+a_{n-1}\theta^{n-1})^{n-1}
\end{split}
\end{equation*}
where $b_i\in\mathbb{Z}_p$ $(0\le i\le n-1)$. If $a_1\equiv0\pmod p$, then we have
\begin{equation*}
\begin{split}
\theta&\equiv b_0+b_1(a_0+a_2\theta^2+\cdots+a_{n-1}\theta^{n-1})+\cdots\\
&+b_{n-1}(a_0+a_2\theta^2+\cdots+a_{n-1}\theta^{n-1})^{n-1}\pmod p.
\end{split}
\end{equation*}
Since the minimum polynomial of $\zeta$ over $\mathbb{Q}_p$ is an Eisenstein polynomial, the coefficients of term $\theta$ that appears in the right side of the above congruence expression are all multiples of $p$. This is a contradiction. Therefore, $a_1\not\equiv0\pmod p$.
\end{proof}

\begin{lemma}\label{le-A.2}
For any $t=H(M\|r)$, we have $\left|t-v\right|_p<1$.
\end{lemma}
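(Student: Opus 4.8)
The plan is to exhibit a single lattice vector lying within distance strictly less than $1$ of $t$; since $v$ is by definition a lattice vector closest to $t$, the desired bound $\left|t-v\right|_p<1$ then follows at once from minimality. The whole argument rests on two structural facts coming from the key-generation step. First, because $j_1=0$ we have $\alpha_1=\theta^{j_1}=1$, so $\mathbb{Z}_p\cdot 1\subseteq\mathcal{L}$. Second, because $K$ is totally ramified of degree $n$ over $\mathbb{Q}_p$ (here $e=n$, $f=1$), its residue field $\mathcal{O}_K/\theta\mathcal{O}_K$ is isomorphic to $\mathbb{F}_p=\mathbb{Z}_p/p\mathbb{Z}_p$, and the maximal ideal $\theta\mathcal{O}_K$ is precisely $\{x\in\mathcal{O}_K:\left|x\right|_p<1\}$, on which $\left|\cdot\right|_p\le\left|\theta\right|_p=p^{-1/n}<1$.

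First I would record that $t=H(M\|r)\in W$ forces $\left|t\right|_p=1$, whence $t\in\mathcal{O}_K^{\times}\subseteq\mathcal{O}_K$. Next I would consider the reduction map $\mathcal{O}_K\to\mathcal{O}_K/\theta\mathcal{O}_K\cong\mathbb{F}_p$. Since $\mathbb{Z}_p\cap\theta\mathcal{O}_K=p\mathbb{Z}_p$, the composite $\mathbb{Z}_p\hookrightarrow\mathcal{O}_K\to\mathcal{O}_K/\theta\mathcal{O}_K$ induces $\mathbb{Z}_p/p\mathbb{Z}_p\cong\mathbb{F}_p$ onto the residue field, so the restriction of the reduction to $\mathbb{Z}_p\cdot\alpha_1=\mathbb{Z}_p\cdot 1$ is already surjective. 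Consequently there is an integer $c\in\{0,1,\dots,p-1\}$ with $t\equiv c\pmod{\theta\mathcal{O}_K}$, that is $t-c\alpha_1\in\theta\mathcal{O}_K$, and therefore $\left|t-c\alpha_1\right|_p<1$.

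Finally I would conclude: $c\alpha_1\in\mathbb{Z}_p\cdot\alpha_1\subseteq\mathcal{L}$ is a lattice vector with $\left|t-c\alpha_1\right|_p<1$, so $\min_{w\in\mathcal{L}}\left|t-w\right|_p<1$, and hence $\left|t-v\right|_p\le\left|t-c\alpha_1\right|_p<1$ for the closest vector $v$. I do not expect a genuine obstacle here: the statement is essentially a one-line consequence of $1\in\mathcal{L}$ together with the residue field being $\mathbb{F}_p$. The only points deserving care are logical rather than technical — one must prove existence of a sufficiently close lattice vector and then invoke minimality of $v$, instead of attempting to identify $v$ explicitly — and one should check, via the ultrametric inequality, that membership in $\theta\mathcal{O}_K$ indeed yields absolute value at most $\left|\theta\right|_p<1$ rather than merely at most $1$.
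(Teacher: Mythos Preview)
Your proof is correct and follows essentially the same approach as the paper: both arguments use $\alpha_1=1\in\mathcal{L}$ to exhibit a single element of $\mathbb{Z}_p\cdot 1\subseteq\mathcal{L}$ within distance strictly less than $1$ of $t$, and then invoke the minimality of $v$. The only cosmetic difference is that the paper finds this element by expanding $t$ in the orthogonal basis $\alpha_1,\dots,\alpha_n$ and taking (the integer part of) the first coordinate, whereas you phrase the same step via the residue map $\mathcal{O}_K\to\mathcal{O}_K/\theta\mathcal{O}_K\cong\mathbb{F}_p$; these are two descriptions of the same reduction.
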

\begin{proof}
Since $K=\mathbb{Q}_p(\theta)$, we can write
$$t=a_1\alpha_1+a_2\alpha_2+\cdots+a_n\alpha_n$$
where $a_i\in\mathbb{Q}_p$, $(1\le i\le n)$. Since $\alpha_1=1\in\mathcal{L}$, we have
$$\left|t-v\right|_p\le\left|t-\lfloor a_1\rfloor\alpha_1\right|_p<1.$$
\end{proof}
\section{Parameters in the toy example}\label{se-B}

Denote $c=755873885678037304696930874820307$, then $\beta_1=1$, $\beta_2=\gamma+1$,
\begin{equation*}
\begin{split}
\beta_3
& = -27108923760453500594114368147861/c\cdot\gamma^{19}\\
& + 6796059925906838569313127002344/c\cdot\gamma^{18}\\
& - 2894797737785704486333606848661/c\cdot\gamma^{17}\\
& - 541149271940293025229720323609264/c\cdot\gamma^{16}\\
& - 1219843153026858133885726090339770/c\cdot\gamma^{15}\\
& - 2808839733549013789844629261746068/c\cdot\gamma^{14}\\
& - 5313226953147016431704012154556784/c\cdot\gamma^{13}\\
& - 13157773442114268257340821253914084/c\cdot\gamma^{12}\\
& - 10548221619842211335147232490989965/c\cdot\gamma^{11}\\
& - 34381480844481901178070560983659504/c\cdot\gamma^{10}\\
& - 40457680366526053232506218942648521/c\cdot\gamma^9\\
& - 40324974837517376176270199515893136/c\cdot\gamma^8\\
& - 61919621949620553697389271288015350/c\cdot\gamma^7\\
& - 49577040839384885080983260133341290/c\cdot\gamma^6\\
& - 33420461054937491965850059498110469/c\cdot\gamma^5\\
& - 63334437339121387216907456988701600/c\cdot\gamma^4\\
& - 35485142914116339830882363036189729/c\cdot\gamma^3\\
& + 13319126479638277384948035705525862/c\cdot\gamma^2\\
& + 2307592141409163879671621342840038/c\cdot\gamma\\
& - 950075552624500643892558174049809/c,
\end{split}
\end{equation*}

\begin{equation*}
\begin{split}
\beta_4
& = -14485353108585552555664586328940/c\cdot\gamma^{19}\\
& + 6920651491155326119230629775288/c\cdot\gamma^{18}\\
& - 2768553635476272399349240377310/c\cdot\gamma^{17}\\
& - 288147212249787451567048006654254/c\cdot\gamma^{16}\\
& - 586697588755490712256433699640138/c\cdot\gamma^{15}\\
& - 1360136668305622477259338235646776/c\cdot\gamma^{14}\\
& - 2505102612122136411591027444054762/c\cdot\gamma^{13}\\
& - 6409095083612775651811538584361845/c\cdot\gamma^{12}\\
& - 4065989881207233334221617665320076/c\cdot\gamma^{11}\\
& - 17189563342011977016894330171381732/c\cdot\gamma^{10}\\
& - 17344915362034791698575633181051724/c\cdot\gamma^9\\
& - 16990260206579698534950129752547810/c\cdot\gamma^8\\
& - 27943965227519474547168258794651124/c\cdot\gamma^7\\
& - 18949419997068447949384315791092050/c\cdot\gamma^6\\
& - 11772018947343579357384864149315374/c\cdot\gamma^5\\
& - 29002780995574442228220988357294545/c\cdot\gamma^4\\
& - 10758271860130185122438319031482080/c\cdot\gamma^3\\
& + 12057724496469654972080098544477460/c\cdot\gamma^2\\
& + 445483857595665959192001502180342/c\cdot\gamma\\
& + 136786440750979030757408189729633/c,
\end{split}
\end{equation*}

\begin{equation*}
\begin{split}
C
& = 741388532569451752141266288491367/c\cdot\gamma^{19}\\
& + 15124398365051901420057848126181428/c\cdot\gamma^{18}\\
& + 142857395839513574315320586100660713/c\cdot\gamma^{17}\\
& + 850069974175542180332480186166191121/c\cdot\gamma^{16}\\
& + 3583767268296497408160589774698255656/c\cdot\gamma^{15}\\
& + 11397973933242174969657155184929403091/c\cdot\gamma^{14}\\
& + 28443296837109455751248008584668558569/c\cdot\gamma^{13}\\
& + 57145971274918465869787828837453870732/c\cdot\gamma^{12}\\
& + 94061415587207823186980038030222604846/c\cdot\gamma^{11}\\
& + 128220594511365406951340913237467802353/c\cdot\gamma^{10}\\
& + 145663740819137451017150305732686357000/c\cdot\gamma^{9}\\
& + 138094774383113040541078865455664306616/c\cdot\gamma^{8}\\
& + 108923717916404777624468448037804399856/c\cdot\gamma^{7}\\
& + 70968946553455805022662878792958039855/c\cdot\gamma^{6}\\
& + 37714649489129176335371132029007027303/c\cdot\gamma^{5}\\
& + 16018199811949157536487621484077823065/c\cdot\gamma^{4}\\
& + 5303035144456472066896985730955276130/c\cdot\gamma^{3}\\
& + 1327278285576254565144739820731811640/c\cdot\gamma^{2}\\
& + 227963523446684894672968194823092749/c\cdot\gamma\\
& + 24324750782448172781059196183979457/c.
\end{split}
\end{equation*}




\end{appendices}



\end{document}